\documentclass[12pt]{article}
\usepackage{geometry}

\usepackage{cite}
\usepackage{url}
\usepackage{amssymb,amsmath,amsthm}
\usepackage{bm}
\usepackage{graphicx}
\usepackage{enumerate}
\usepackage{microtype}
\usepackage{color}
\usepackage{hyperref}
\usepackage{multirow}
\usepackage{array}
\usepackage{float}
\usepackage{subcaption}
\usepackage{booktabs}
\usepackage{xcolor}
\usepackage{colortbl}

\newtheorem{theorem}{Theorem}

\newcolumntype{C}[1]{>{\centering\let\newline\\\arraybackslash\hspace{0pt}}m{#1}}

\begin{document}

\title{Beltrami coefficient and angular distortion of discrete geometric mappings}
\author{Zhiyuan Lyu$^{1}$, Gary P. T. Choi$^{1,\ast}$\\
\\
\footnotesize{$^{1}$Department of Mathematics, The Chinese University of Hong Kong}\\
\footnotesize{$^\ast$To whom correspondence should be addressed; E-mail: ptchoi@cuhk.edu.hk}
}
\date{ }

\maketitle

\begin{abstract}
Over the past several decades, geometric mapping methods have been extensively developed and utilized for many practical problems in science and engineering. To assess the quality of geometric mappings, one common consideration is their conformality. In particular, it is well-known that conformal mappings preserve angles and hence the local geometry, which is beneficial in many applications. Therefore, many existing works have focused on the angular distortion as a measure of the conformality of mappings. More recently, quasi-conformal theory has attracted increasing attention in the development of geometric mapping methods, in which the Beltrami coefficient has also been considered as a representation of the conformal distortion. However, the precise connection between these two concepts has not been analyzed. In this work, we study the connection between the two concepts and establish a series of theoretical results. In particular, we discover a simple relationship between the norm of the Beltrami coefficient of a mapping and the absolute angular distortion of triangle elements under the mapping. We can further estimate the maximal angular distortion using a simple formula in terms of the Beltrami coefficient. We verify the developed theoretical results and estimates using numerical experiments on multiple geometric mapping methods, covering conformal mapping, quasi-conformal mapping, and area-preserving mapping algorithms, for a variety of surface meshes in biology and engineering. Altogether, by establishing the theoretical foundation for the relationship between the angular distortion and Beltrami coefficient, our work opens up new avenues for the quantification and analysis of surface mapping algorithms.
\end{abstract}

\textbf{Keywords:} Angular distortion, Beltrami coefficient, conformal mapping, quasi-conformal theory, geometric mapping

\section{Introduction}

In many science and engineering problems, a fundamental task is to handle shape data and analyze the geometric features of different objects. To achieve this, it is common to consider geometric mapping methods. In particular, geometric mappings allow one to transform a shape into another one, thereby facilitating comparisons between shapes and simplifying many computational tasks. Over the past several decades, numerous efforts have been devoted to the development of geometric mapping methods. Surface parameterization~\cite{floater2005surface,sheffer2006mesh} aims to map geometrically complex surfaces onto some simpler domains. For instance, genus-0 closed surfaces can be parameterized onto a unit sphere, while simply-connected open surfaces can be parameterized onto a planar domain such as a rectangle or a unit disk.

Note that the quality of the geometric mappings is an important consideration in many practical applications. In particular, the distortions or errors induced by the mapping methods may largely affect the subsequent processing and analysis tasks of the surfaces. Therefore, it is desirable that the mappings possess low geometric distortion and can maintain certain important properties of the given shape. By a classical result in differential geometry, it is well-known that isometric (length-preserving) mappings are generally impossible. Specifically, isometric mappings between two surfaces are possible only if the surfaces have the same Gaussian curvatures~\cite{do2016differential}. In many practical problems, one may need to map a surface onto another surface with a highly different geometry, or parameterize a non-planar surface onto a planar domain, in which the above condition is easily violated. Therefore, in most cases, the mappings will unavoidably cause some distortions in either angle, area, or both. Because of this theoretical limitation, many prior works have focused on the development of angle-preserving (conformal) mapping methods, such as harmonic energy minimization~\cite{gu2004genus,lai2014folding} and relevant linearization methods~\cite{angenent1999laplace,haker2000conformal,choi2015flash}, circle patterns~\cite{kharevych2006discrete}, spectral conformal map~\cite{springborn2008conformal}, Ricci flow~\cite{jin2008discrete,yang2009generalized}, slit map~\cite{yin2008slit}, curvature flow~\cite{kazhdan2012can,crane2013robust}, conformal energy minimization~\cite{yueh2017efficient,tan2025robust}, partial welding~\cite{choi2020parallelizable}, and Dirichlet energy minimization~\cite{liao2022convergence}. There have also been multiple works on area-preserving mapping methods, such as Lie advection~\cite{zou2011authalic}, optimal mass transport~\cite{zhao2013area}, density-equalizing map~\cite{choi2018density,choi2020area,lyu2024spherical,yao2026toroidal}, stretch energy minimization~\cite{yueh2019novel,yueh2023theoretical,sutti2024riemannian}, and authalic energy minimization~\cite{liu2024convergent,liu2026spherical}. Besides, some existing works have considered achieving a balance between the angular and area distortion~\cite{liu2008local,wang2018novel,lyu2024bijective,choi2025hemispheroidal}. In many of the above works, the given surfaces are discretized as triangle meshes. Therefore, to assess the angle-preserving property of the surface mappings, a common approach is to directly evaluate the angle difference between the same discrete triangle mesh elements under the mapping, which gives an intuitive measure of the angular distortion. 

More recently, quasi-conformal theory~\cite{lehto1973quasiconformal,gardiner2000quasiconformal,ahlfors2006lectures} has been introduced to the field of geometric mapping and parameterization for both planar shapes~\cite{weber2012computing,jones2013planar,lam2014landmark,mosleh2025data} and surfaces~\cite{saucan2008local,zeng2010supine,zeng2011registration,lui2014teichmuller,choi2015fast,choi2024fast} (see~\cite{choi2023recent} for a recent survey). In many of the above works on quasi-conformal mappings, a central concept is the Beltrami coefficient, which is a complex-valued function encoding important geometric properties of a quasi-conformal map. Also, the Beltrami coefficient has been widely utilized in many practical applications in science and engineering, such as biological and medical morphometric analysis~\cite{choi2018planar,choi2020tooth,guo2023automatic}, image restoration~\cite{lau2019restoration}, surface remeshing~\cite{lyu2024bijective,lyu2024ellipsoidal}, and isogeometric analysis~\cite{pan2018low,pan2022constructing,pan2023g1}. Note that the Beltrami coefficient can be used as a measure of the conformal distortion of the mapping. In particular,  it is well-known that a mapping is conformal if its Beltrami coefficient is 0, and a Beltrami coefficient with a smaller norm corresponds to a mapping closer to being conformal. However, the precise connection between the concepts of Beltrami coefficients and angular distortions has not been fully understood. In this work, we establish the theoretical relationship between the Beltrami coefficient and the angular distortion of discrete geometric mappings (see Fig.~\ref{fig:brain_disk}). Specifically, we elucidate how the Beltrami coefficients can be used for estimating the angular distortion. As shown in Fig.~\ref{fig:brain_disk}, the discrete, face-based angular distortion measure constructed in our work matches the Beltrami coefficient plot both qualitatively and quantitatively. Motivated by the observed similarities between the measures, we establish a series of theoretical results and bounds in both the continuous case and the discrete case. We further verify our established results by evaluating the conformality of a large variety of mapping examples in terms of the Beltrami coefficient and the angular distortion, from which we can see that the two concepts are highly related and consistent. Altogether, our study provides new insights into the quantification and evaluation of the properties of different geometric mapping methods for various practical applications in science, engineering, and medicine.

\begin{figure}[t]
    \centering
    \includegraphics[width=\linewidth]{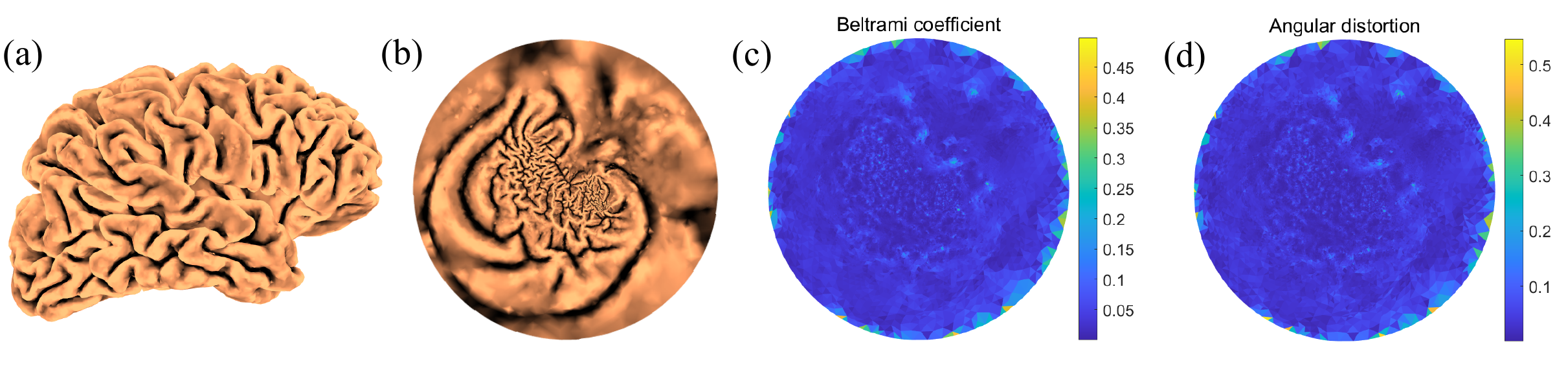}
    \caption{\textbf{The relationship between the Beltrami coefficient and angular distortion of discrete geometric mappings.} (a)--(b) Given a discrete surface such as a brain cortical surface, one can utilize surface parameterization methods to map it onto another domain. Here, the surface is parameterized onto the unit disk using the method in~\cite{choi2015fast}. (c) To assess the conformal distortion of the map, one way is to compute the Beltrami coefficient $\mu$ and evaluate the norm $|\mu|$ on each triangle element. Here, the mapping result is color-coded with the value of $|\mu|$ on each triangle element. (d) We may also directly evaluate the angular distortion by computing the angle difference for every triangle element under the mapping. Here, the mapping result is color-coded with a face-based angular distortion measure defined in our work, calculated using the average value of the absolute angle difference for each triangle. }
    \label{fig:brain_disk}
\end{figure}

The rest of this paper is organized as follows. In Section~\ref{sec:background}, we introduce some fundamental concepts in conformal mappings and quasi-conformal theory. In Section~\ref{sec:main}, we describe our main results on the theoretical connection between the Beltrami coefficient and the angular distortion of discrete geometric mappings. In Section~\ref{sec:experiment}, we present experimental results to verify the established estimates using a large variety of surface examples. We conclude this work and discuss future directions in Section~\ref{sec:conclusion}.

\section{Mathematical Background} \label{sec:background}
In this section, we review some basic concepts of conformal mapping and quasi-conformal theory. More details can be found in~\cite{lehto1973quasiconformal,gardiner2000quasiconformal,ahlfors2006lectures}.

\subsection{Conformal mapping}

Consider a map $f:\overline{\mathbb{C}} \to \overline{\mathbb{C}}$ on the extended complex plane with $f(z) = f(x,y) = u(x,y) + i v(x,y)$, where $z = x+iy$, and $u(x,y), v(x,y)$ are two functions in $x$ and $y$. We say that $f$ is a \emph{conformal} map if its derivative is non-zero everywhere and it  satisfies the Cauchy--Riemann equations:
\begin{equation}\label{eqt:cauchyriemann}
    \frac{\partial u}{\partial x} = \frac{\partial v}{\partial y}  \ \ \text{ and } \ \ \frac{\partial u}{\partial y} = -\frac{\partial v}{\partial x}.
\end{equation}
It is well-known that conformal maps preserve angles. Consequently, the local geometry is preserved under conformal maps. Intuitively, conformal maps send infinitesimal circles to infinitesimal circles. However, it is noteworthy that lengths are not necessarily preserved under conformal maps.

Also, note that if we write 
\begin{equation}\label{eqt:fzbar}
    \frac{\partial f}{\partial \overline{z}} = f_{\overline{z}} = \frac{1}{2}\left(\frac{\partial f}{\partial x} + i\frac{\partial f}{\partial y}\right),
\end{equation}
then we can rewrite Eq.~\eqref{eqt:cauchyriemann} as
\begin{equation}\label{eqt:cauchyriemann_z}
    \frac{\partial f}{\partial \overline{z}} = 0.
\end{equation}
The above equation is naturally connected to the equation for quasi-conformal maps, as we will discuss later.

Besides, the notion of conformal mapping can be naturally extended from the complex plane to surfaces. A surface $\mathcal{S}$ with a conformal structure is called a \emph{Riemann surface.} Given two Riemann surfaces $\mathcal{M}$ and $\mathcal{N}$, a map $f: \mathcal{M} \rightarrow \mathcal{N}$ is called a conformal map if satisfies
\begin{equation}
    f^{*}ds^2_{\mathcal{N}} = \lambda ds^2_{\mathcal{M}}
\end{equation}
where $\lambda$ is called the conformal factor. It is clear that $f$ preserves both angles and the infinitesimal shapes of surfaces, but omits their size or curvature. More generally, we have the following uniformization result:

\begin{theorem}(Uniformization of Riemann surfaces). Every simply connected Riemann surface $\mathcal{M}$ is conformally equivalent to exactly one of the following three domains:
\begin{enumerate}[(i)]
    \item The Riemann sphere,
    \item The complex plane,
    \item The open unit disk.
\end{enumerate} 
\end{theorem}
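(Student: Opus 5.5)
The uniformization theorem is a deep classical result, and the paper only recalls it. A self-contained proof is not expected here, but the standard route is as follows.

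The proof splits into existence and uniqueness, and I would follow the potential-theoretic route of Koebe and Poincaré. \emph{Uniqueness} is the easy part and comes first. The three domains are pairwise non-equivalent for two reasons. First, the Riemann sphere is compact while the other two are not, so no homeomorphism exists. Second, a conformal map $\mathbb{C}\to\mathbb{D}$ would be a bounded entire function, hence constant by Liouville's theorem.

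For \emph{existence}, I would separate two cases. Say $\mathcal{M}$ is \emph{hyperbolic} if it admits a Green's function $g(\cdot,p)$, i.e.\ a positive harmonic function on $\mathcal{M}\setminus\{p\}$ with a logarithmic pole $-\log|z|$ at $p$ that is minimal among such functions. Otherwise $\mathcal{M}$ is \emph{parabolic or elliptic}. I would construct $g$ by Perron's method, taking the supremum of subharmonic functions on $\mathcal{M}\setminus\{p\}$ that are compactly supported and have the prescribed logarithmic singularity. Either this family is uniformly bounded away from $p$, which gives a Green's function, or it is not.

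In the hyperbolic case, simple connectivity lets me define a harmonic conjugate $h$ of $g$ globally: $\mathcal{M}\setminus\{p\}$ may not be simply connected, but the period of $*dg$ around $p$ is $2\pi$. So $f=e^{-(g+ih)}$ is a single-valued holomorphic map $\mathcal{M}\to\mathbb{D}$ with a simple zero at $p$. Injectivity follows by comparing $g(\cdot,q)$ with $-\log|(f-f(q))/(1-\overline{f(q)}f)|$ via the maximum principle and minimality. Surjectivity follows from exhaustion: the boundary values of $g$ tend to $0$ along an exhaustion by compact subsurfaces.

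In the non-hyperbolic case, I would instead build a harmonic function with a dipole singularity $\mathrm{Re}(1/z)$ at $p$. I would do this by exhausting $\mathcal{M}$ by compact bordered subsurfaces $\mathcal{M}_n$, solving the Dirichlet problem on each, and extracting a limit via Harnack's inequality together with a uniform bound. That bound is obtained from a Dirichlet-integral (energy) estimate and is the technical heart of the argument. Completing this function with its conjugate gives a meromorphic $F$ with a simple pole at $p$. An argument principle and maximum principle argument shows $F$ is injective. Its image is then either all of $\overline{\mathbb{C}}$, when $\mathcal{M}$ is compact, or $\mathbb{C}$ minus at most one point, which after a Möbius transformation is $\mathbb{C}$. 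The reason the image cannot be a larger complement is that an injective image missing a nondegenerate set would produce a Green's function and contradict non-hyperbolicity.

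I expect the main obstacle to be the dipole construction in the non-hyperbolic case, specifically the uniform estimates on the approximating harmonic functions needed to pass to the limit. I would first try Dirichlet-principle energy estimates, using the fact that non-hyperbolicity forces the capacity of the ideal boundary to vanish.
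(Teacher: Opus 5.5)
The paper gives no proof of this statement. It is quoted as classical background with pointers to the standard references, so there is nothing to compare against, and your outline of the Koebe--Poincar\'e potential-theoretic proof already goes beyond the paper. Uniqueness, the Perron dichotomy, the construction of $f=e^{-(g+ih)}$, and the Green's-function argument that pins down the image in the non-hyperbolic case are fine at sketch level.

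The step that does not hold up as written is surjectivity in the hyperbolic case.

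\begin{itemize}
\item The property you invoke, that $g(\cdot,p)$ tends to $0$ along an exhaustion, is not a general feature of Green's functions. On the hyperbolic surface $\mathbb{D}\setminus\{1/2\}$, the Green's function with pole $0$ is $-\log|z|$, which tends to $\log 2$ at the puncture.
\item For simply connected $\mathcal{M}$ the property is true, but it is essentially a restatement of $f(\mathcal{M})=\mathbb{D}$, so asserting it begs the question. This is exactly where simple connectivity must be used a second time, and your sketch does not say how.
\end{itemize}

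A standard repair combines minimality with Koebe's square-root trick. Let $\Omega=f(\mathcal{M})$, which is simply connected because $f$ is injective. Write $\varphi_c(w)=(w-c)/(1-\bar c w)$ and suppose $a\in\mathbb{D}\setminus\Omega$.
\begin{itemize}
\item Since $\varphi_a$ has no zeros on $\Omega$, it has a holomorphic square root $s:\Omega\to\mathbb{D}$. Then $k=\varphi_{s(0)}\circ s$ is injective with $k(0)=0$.
\item Hence $-\log|k\circ f|$ is a positive harmonic competitor with a simple logarithmic pole at $p$. Minimality of $g=-\log|f|$ gives $|k(w)|\le|w|$ on $\Omega$.
\item But $w=B(k(w))$, where $B=\varphi_a^{-1}\circ(\zeta\mapsto\zeta^2)\circ\varphi_{s(0)}^{-1}$ is a two-to-one self-map of $\mathbb{D}$ fixing $0$. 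Schwarz's lemma gives $|w|<|k(w)|$ for $w\neq 0$, a contradiction.
\end{itemize}
If you prefer the exhaustion route, exhaust $\mathcal{M}$ by closed discs $\mathcal{M}_n$ whose Green's functions satisfy $g_n\uparrow g$. Each $f_n=e^{-(g_n+ih_n)}$ is then a proper degree-one map of $\mathcal{M}_n$ onto $\mathbb{D}$. After normalization, the maps $f\circ f_n^{-1}$ are self-maps of $\mathbb{D}$ fixing $0$ with derivatives tending to $1$. Schwarz's lemma forces them to converge to the identity, and Hurwitz's theorem then shows $f$ is onto.

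Three smaller points.
\begin{itemize}
\item In your hyperbolic injectivity step, minimality only yields $|\psi|\le|f_q|$, where $\psi=\varphi_{f(q)}\circ f$ and $f_q=e^{-(g(\cdot,q)+ih_q)}$. To apply the maximum principle to $\psi/f_q$ you need equality at $p$, i.e.\ $g(p,q)=g(q,p)$. You obtain this by running the same comparison with $p$ and $q$ interchanged.
\item In the non-compact, non-hyperbolic case, injectivity of $F$ does not follow from the dipole singularity alone. On $\mathbb{C}$ with $p=0$, the function $1/z+z^2$ has the right singularity and is not injective. You must say which property of your limit you use: finite Dirichlet energy away from $p$, or univalence of the approximants on $\mathcal{M}_n$ passed to the limit by Hurwitz's theorem.
\item Since $F(p)=\infty$, the non-compact image is $\overline{\mathbb{C}}$ minus one point. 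It is not $\mathbb{C}$ minus a point, which would not even be simply connected.
\end{itemize}
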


In other words, given any simply connected Riemann surfaces $\mathcal{M}$, we can always find a conformal map $f$ between $\mathcal{M}$ and one of the three domains.

\subsection{Quasi-conformal theory}
Quasi-conformal mapping is a generalization of conformal mapping, which allows bounded conformal distortions. Mathematically, a map $f: \mathbb{C} \rightarrow \mathbb{C}$ is called a \emph{quasi-conformal map} if it satisfies the Beltrami equation
\begin{equation}\label{eqt:Beltrami_eq}
    \frac{\partial f}{\partial \overline{z}} = \mu(z) \frac{\partial f}{\partial z}
\end{equation}
for some complex-valued function $\mu$ with $\|\mu \|_{\infty}<1$, where $\frac{\partial f}{\partial \overline{z}}$ is given by Eq.~\eqref{eqt:fzbar} and $\frac{\partial f}{\partial z}$ is defined as
\begin{equation}
    \frac{\partial f}{\partial z} = f_{z} =  \frac{1}{2}\left(\frac{\partial f}{\partial x} - i\frac{\partial f}{\partial y}\right).
\end{equation}

\begin{figure}[t]
    \centering
    \includegraphics[width=\textwidth]{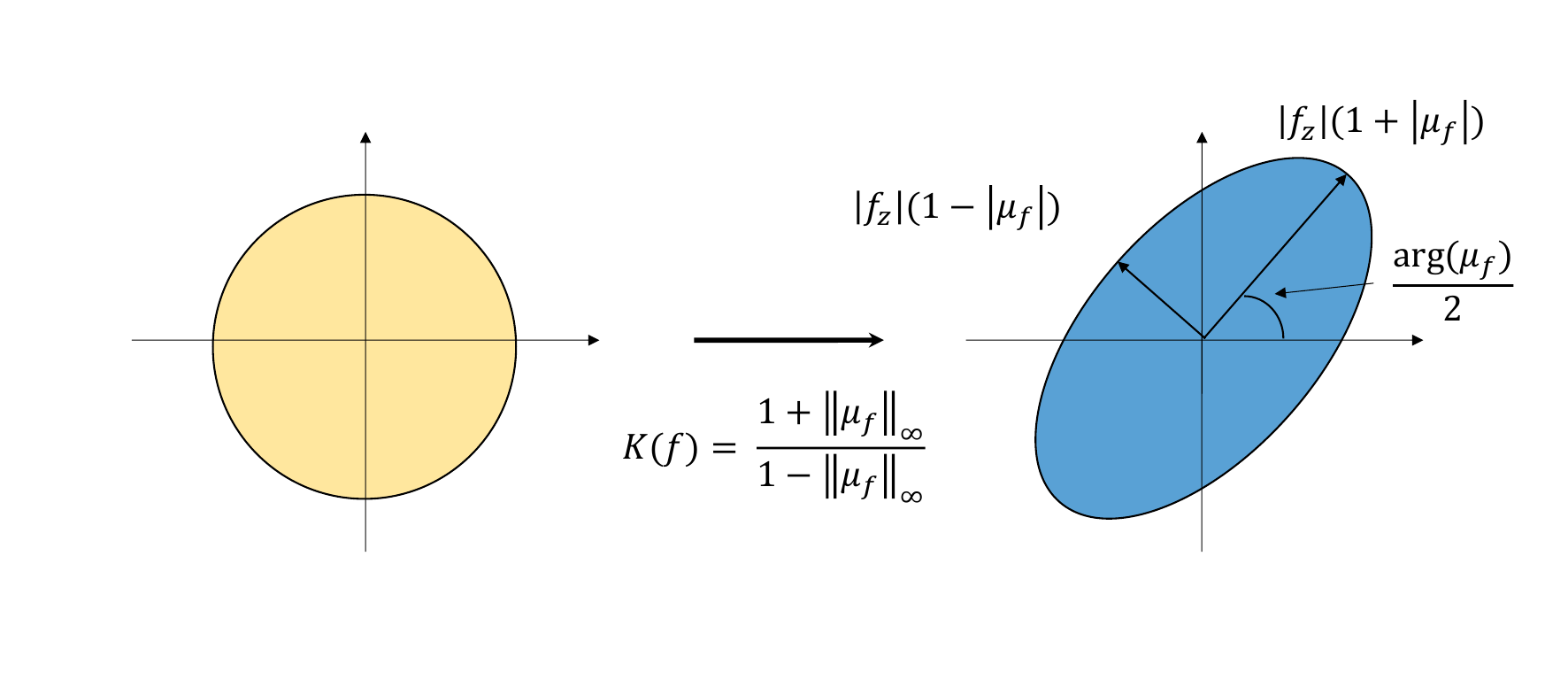}
    \caption{\textbf{An illustration of quasi-conformal maps.} Under a quasi-conformal map, an infinitesimal circle is mapped to an infinitesimal ellipse with bounded eccentricity.}
    \label{fig:quasiconformal_map}
\end{figure}

In particular, the complex-valued function $\mu$ is called the \textit{Beltrami coefficient} of $f$, which serves as an effective measure of the conformal distortion of $f$. Specifically, when $\mu = 0$, Eq.~\eqref{eqt:Beltrami_eq} degenerates to the Cauchy--Riemann equation~\eqref{eqt:cauchyriemann_z}, and hence the mapping $f$ is conformal. Intuitively, around a point $z_0 \in \mathbb C$, $f$ can be expressed as:
\begin{equation}\label{eqt:first_order_approximation}
    f(z) \approx f(z_0) + f_{z}(z_0)(z-z_0) + f_{\Bar{z}}(z_0)\overline{(z-z_0)} = f(z_0) + f_{z}(z_0)(z-z_0 + \mu(z_0)\overline{(z-z_0)}).
\end{equation}
The above formula suggests that $f$ maps an infinitesimal circle centered at $z_0$ to an infinitesimal ellipse centered at $f(z_0)$ (see Fig.~\ref{fig:quasiconformal_map}). Moreover, it is easy to see that when $\mu(z_0) = 0$, $f$ is conformal, and the infinitesimal circle remains an infinitesimal circle after mapping. Additionally, we can determine the angles at which the magnification and shrinkage are maximized, as well as quantify the degree of magnification and shrinkage at those angles. More specifically, the angle of maximal magnification is $\operatorname{arg}(\mu(z_0))/2$ with the magnifying factor $|f_{z}(z_0)|(1+|\mu(z_0)|)$, and the angle of maximal shrinkage is $(\operatorname{arg}(\mu(z_0))+\pi)/2$ with the shrinking factor $|f_{z}(z_0)(|1-|\mu(z_0)|)$. The maximal dilation of $f$ is 
\begin{equation}
    K(f) = \frac{1+\|\mu \|_{\infty}}{1-\|\mu\|_{\infty}}.
\end{equation}
Thus, the Beltrami coefficient $\mu$ encodes important geometric information about the quasi-conformal map $f$.

Besides encoding important geometric information about the associated quasi-conformal map, the Beltrami coefficient is also closely related to the bijectivity of the mapping. More specifically, we have the following result~\cite{gardiner2000quasiconformal}: 
\begin{theorem}
    If $f$ is a $C^1$ mapping satisfying $\|\mu \|_{\infty} <1$, then $f$ is locally invertible.
\end{theorem}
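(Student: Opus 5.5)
The plan is to show that the real Jacobian determinant of $f$ is strictly positive at every point, and then invoke the inverse function theorem. Since $f$ is $C^1$, the partial derivatives $f_z$ and $f_{\overline{z}}$ are continuous. Write $f = u+iv$ and view $f$ as a map $\mathbb{R}^2\to\mathbb{R}^2$. The first step is to express the Jacobian matrix
\[
Df = \begin{pmatrix} u_x & u_y \\ v_x & v_y \end{pmatrix}
\]
in complex form, via $f_x = f_z + f_{\overline{z}}$ and $f_y = i(f_z - f_{\overline{z}})$. A direct expansion then gives the standard identity
\[
J_f = \det Df = u_x v_y - u_y v_x = |f_z|^2 - |f_{\overline{z}}|^2.
\]
This is a routine algebraic check. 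One expands $|f_z|^2 = \tfrac14\bigl((u_x+v_y)^2 + (v_x-u_y)^2\bigr)$ and $|f_{\overline{z}}|^2 = \tfrac14\bigl((u_x-v_y)^2+(v_x+u_y)^2\bigr)$ and subtracts.

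Next, substitute the Beltrami equation~\eqref{eqt:Beltrami_eq}, $f_{\overline{z}} = \mu f_z$, to obtain
\[
J_f = |f_z|^2\bigl(1-|\mu|^2\bigr).
\]
Because $\|\mu\|_\infty<1$, the factor $1-|\mu|^2$ is strictly positive. Hence $J_f \ge 0$ everywhere, and $J_f>0$ exactly where $f_z\neq 0$.

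The main obstacle is ruling out points where $f_z = 0$. At such a point the Beltrami equation forces $f_{\overline{z}} = 0$ as well, so $Df = 0$, and the inverse function theorem cannot be applied there. I would handle this the same way the conformal definition does: the standing hypothesis for quasi-conformal maps (implicit in the paper's setup, as with the non-vanishing derivative required of conformal maps) is that $f$ is a sense-preserving homeomorphism, or at least nondegenerate, with $f_z\neq 0$. I would state this assumption explicitly. If it is not available, I would argue that $\mu = f_{\overline{z}}/f_z$ is only well defined where $f_z \neq 0$. In that case the hypothesis that $f$ has a Beltrami coefficient pointwise, with $|\mu|<1$, should be read as including $f_z\neq0$, since otherwise the ratio $|f_{\overline{z}}|/|f_z|$ is not defined there.

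With $J_f>0$ at every point and $f\in C^1$, the inverse function theorem gives a neighborhood of each $z_0$ on which $f$ is a $C^1$ diffeomorphism onto its image. This is exactly local invertibility. As a byproduct, the positive sign of $J_f$ shows that $f$ is orientation-preserving, which is the property later exploited to detect foldings in discrete mappings.
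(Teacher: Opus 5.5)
The paper gives no proof of this statement; it only cites Gardiner--Lakic. Your argument is the standard one: $J_f=|f_z|^2-|f_{\overline{z}}|^2=|f_z|^2(1-|\mu|^2)$, followed by the inverse function theorem. It is correct once $f_z\neq 0$ is assumed at every point.

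You are right that this extra hypothesis is needed. In fact the statement as printed is false without it. The map $f(z)=z^2$ is $C^1$ and satisfies $f_{\overline{z}}=0\cdot f_z$, so $\|\mu\|_\infty=0$, yet it is two-to-one near $0$; a constant map is an even cruder example. So state $f_z\neq 0$ outright, as the analogue of the nonvanishing-derivative clause in the paper's definition of conformal maps, rather than reading it off from the hypotheses.

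Neither of your two readings actually yields it. First, $\|\mu\|_\infty$ is an essential supremum, and the Beltrami equation only asks for some coefficient $\mu$, not the ratio $f_{\overline{z}}/f_z$. A null set where $f_z=0$ is therefore invisible to the hypothesis, which is exactly what happens for $z^2$. Second, being a sense-preserving $C^1$ homeomorphism does not force $f_z\neq 0$. For example, $f(z)=z|z|^{\alpha}$ with $\alpha>0$ has $|\mu|\equiv \alpha/(2+\alpha)<1$ away from $0$, but $Df(0)=0$. That map is invertible, but its inverse is not differentiable at $0$, so the inverse function theorem does not give it.

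One small step should also be made explicit. Once $f_z\neq 0$ and $f\in C^1$, the coefficient $\mu=f_{\overline{z}}/f_z$ is continuous. The a.e.\ bound $\|\mu\|_\infty<1$ then upgrades to $|\mu(z)|<1$ at every point, which is what $J_f>0$ pointwise requires.
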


Moreover, suppose $f:\Omega_1 \rightarrow \Omega_2$ and $g:\Omega_2 \rightarrow \Omega_3$ are two quasi-conformal maps with Beltrami coefficients $\mu_f$ and $\mu_g$ respectively. The Beltrami coefficient of the composition map $g\circ f: \Omega_1 \rightarrow \Omega_3$ can be obtained by
\begin{equation}\label{composition_BC}
    \mu_{g \circ f} = \dfrac{\mu_f + (\mu_{g}\circ f) \tau }{1 + \overline{\mu_f}(\mu_g \circ f) \tau},
\end{equation}
where $\tau = \overline{f_z} / f_z$. As a special case, if we consider a conformal map $g$, then it is easy to see that $\mu_{g \circ f} = \mu_f$. In other words, composing a conformal map with a given quasi-conformal map will not change its Beltrami coefficient. 

More generally, given two Riemann surfaces $\mathcal{M}$ and $\mathcal{N}$, we can consider a quasi-conformal map $f:\mathcal{M} \to \mathcal{N}$ and assess the quasi-conformal distortion using the Beltrami differentials via the local charts of the surfaces. Specifically, a Beltrami differential $\mu(z) \frac{d\overline{z}}{dz}$ is an assignment to each chart $(U_{\alpha}, \phi_{\alpha})$ of an $L_{\infty}$ complex-valued function $\mu_{\alpha}$, defined on local parameter $z_{\alpha}$, such that
\begin{equation}
\mu_{\alpha}\frac{d\overline{z}_{\alpha}}{dz_{\alpha}} = \mu_{\beta}\frac{d\overline{z}_{\beta}}{dz_{\beta}}
\end{equation}
on the domain also covered by another chart $(U_{\beta}, \phi_{\beta})$, with $\frac{{d\overline{z}_{\beta}}}{dz_{\beta}} = \frac{d}{dz_{\alpha}} \phi_{\alpha \beta}$ and $\phi_{\alpha \beta} = \phi_{\beta} \circ \phi_{\alpha}^{-1}$. In practice, to assess the quasi-conformal distortion of a small local region under the mapping, we can conformally parameterize it onto the plane and apply the above-mentioned planar methods.

\section{Theoretical relationship between the Beltrami coefficient and angular distortion} \label{sec:main}

In this section, we introduce our established theoretical results on the relationship between the Beltrami coefficient and angular distortion on geometric mappings in detail.

\subsection{The relationship between angles in terms of the Beltrami coefficient}

First, we prove that the Beltrami coefficient can be used for expressing the change of an angle at a point under a quasi-conformal mapping:
\begin{theorem}\label{theorem1}
     Let $f$ be a quasi-conformal map with Beltrami coefficient $\mu$ at the point $z$. Let $\theta$ be the angle between a curve through $z$ and the maximal stretch direction (i.e., the direction along maximal magnification). Suppose that $f$ maps the angle $\theta$ to the angle $\phi$. Then, we have
\begin{equation}
    \tan \phi = \frac{1}{K_z} \tan \theta,
\end{equation}
where $K_z = \frac{1 + |\mu|}{1 - |\mu|}$ is the dilation of $f$.   
\end{theorem}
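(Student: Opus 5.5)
We will work with the first-order approximation \eqref{eqt:first_order_approximation}, since the angle at $f(z)$ depends only on the differential of $f$ at $z$. Writing $w = z' - z$ for a tangent vector at $z$, the differential acts as
\[
w \mapsto f_z(z)\bigl(w + \mu(z)\,\overline{w}\bigr).
\]
Multiplication by the nonzero complex number $f_z(z)$ is a rotation combined with a scaling, so it preserves angles. It therefore suffices to study the $\mathbb{R}$-linear map $L(w) = w + \mu\overline{w}$.

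Write $\mu = |\mu| e^{i\alpha}$ and rotate coordinates by $w = e^{i\alpha/2}\zeta$. A direct computation gives
\[
L(w) = e^{i\alpha/2}\bigl(\zeta + |\mu|\,\overline{\zeta}\bigr).
\]
The outer rotation again does not affect angles, so we reduce to the map $\zeta \mapsto \zeta + |\mu|\overline{\zeta}$. Setting $\zeta = s + it$, this map sends
\[
s + it \;\mapsto\; (1+|\mu|)\,s + i\,(1-|\mu|)\,t,
\]
which is a diagonal stretch. Its maximal stretch direction is the real axis in $\zeta$-coordinates, that is, the direction $\arg\mu/2$ in the original coordinates. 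This is consistent with the description of maximal magnification in Section~\ref{sec:background}.

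Now take a curve through $z$ whose tangent makes angle $\theta$ with the maximal stretch direction. Its tangent is $\zeta = r e^{i\theta}$, and its image is
\[
r\bigl((1+|\mu|)\cos\theta + i\,(1-|\mu|)\sin\theta\bigr).
\]
The image of the maximal stretch direction is still the positive real axis, so the image angle $\phi$ satisfies
\[
\tan\phi = \frac{(1-|\mu|)\sin\theta}{(1+|\mu|)\cos\theta} = \frac{1}{K_z}\tan\theta.
\]

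The main point needing care is justifying that the angle between the image curves is measured by their image tangent vectors under the differential. This follows because $f$ is $C^1$ and locally invertible by the earlier theorem, so the differential is nondegenerate when $|\mu|<1$. A minor subtlety is that the formula holds for $\theta \ne \pi/2$. At $\theta = \pi/2$ both sides are infinite, which is consistent with $\phi = \pi/2$. Because $\tan$ is monotone, $\phi$ also lies in the same quadrant as $\theta$.
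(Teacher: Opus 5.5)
Your proof is correct and follows essentially the same route as the paper. Both arguments linearize $f$ at $z$, rotate coordinates so the principal directions lie along the axes, reduce to the diagonal stretch $(s,t)\mapsto\bigl((1+|\mu|)s,(1-|\mu|)t\bigr)$, and read off $\tan\phi=\frac{1-|\mu|}{1+|\mu|}\tan\theta$. The only cosmetic difference is the normalization: you first divide out the rotation-scaling $f_z(z)$ and then rotate the domain by $\arg\mu/2$, whereas the paper keeps $w=Az+B\bar z$ and rotates the domain by $(\theta_B-\theta_A)/2$ and the target by $(\theta_A+\theta_B)/2$, which is the same computation.
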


\textit{Proof.}
We prove the statement by aligning the principal directions (i.e., the directions of maximal magnification and shrinkage) with the coordinate axes via rotation and analyzing the affine approximation.

Specifically, since $f$ is a quasi-conformal mapping, it is differentiable at the point $z$. In the local neighborhood of $z$, the mapping $f$ can be approximated by an affine transformation:
\begin{equation}
w = A z + B \bar{z}, \quad A, B \in \mathbb{C},
\end{equation}
where $K_z= \dfrac{|A| + |B|}{|A| - |B|}$. 

\begin{figure}[t!]
    \centering
    \includegraphics[width=\textwidth]{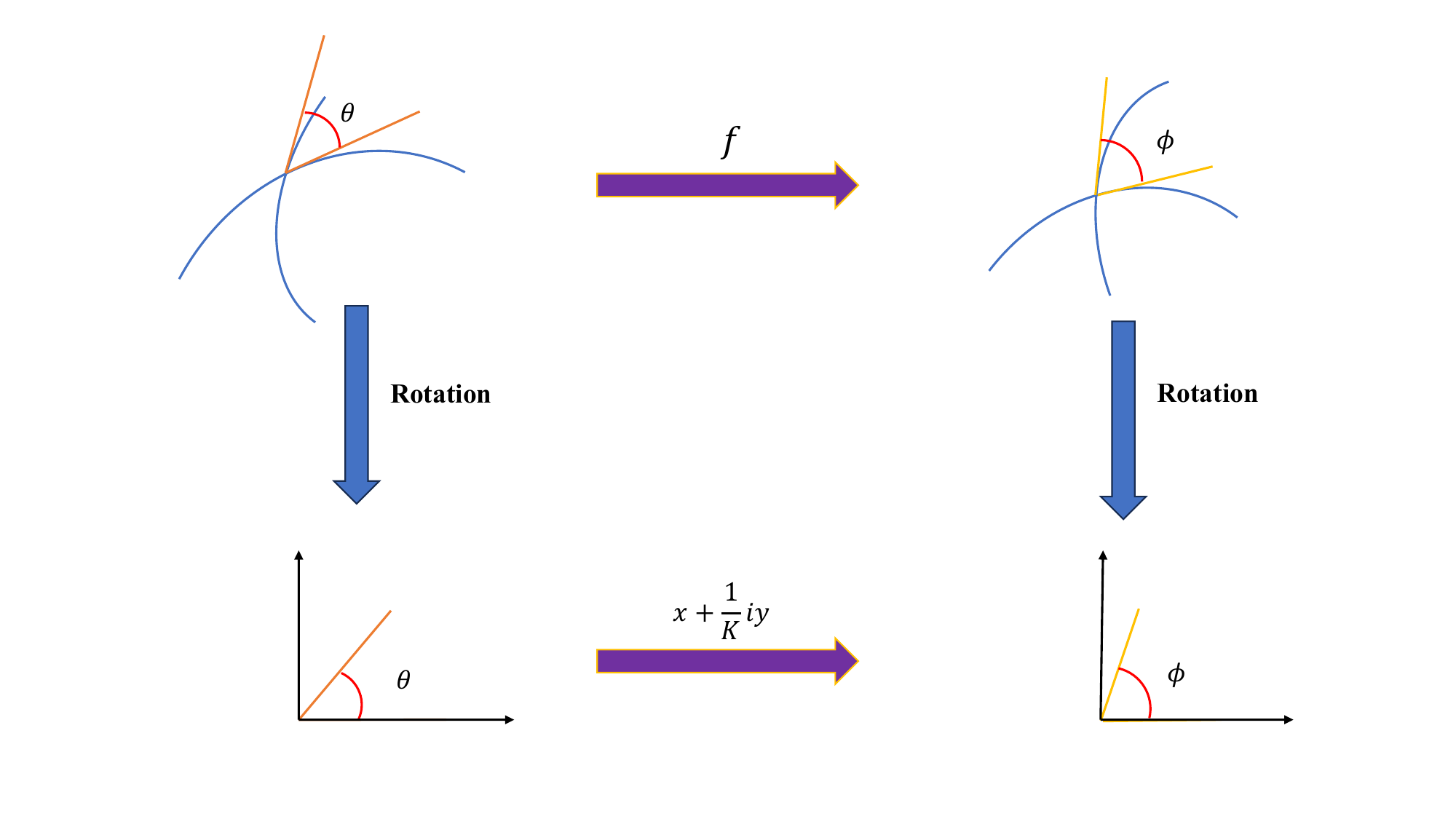}
    \caption{\textbf{An illustration for the proof of Theorem~\ref{theorem1}.} After a rotation, the angle induced by two curves is unchanged. Then we use an affine transformation to approximate the quasi-conformal mapping and obtain the relationship between the angles in terms of the Beltrami coefficient.}
    \label{fig:proof_mu}
\end{figure}

Now, we rotate the coordinate systems to align the principal directions with the coordinate axes (see Fig.~\ref{fig:proof_mu} as an illustration). Let $\arg A = \theta_A$ and $\arg B = \theta_B$. The rotation angles are given by:
\begin{align}
\alpha &= \frac{\theta_B - \theta_A}{2}, \\
\beta &= \frac{\theta_A + \theta_B}{2},
\end{align}
and the rotated new coordinates are:
\begin{align}
\zeta &= e^{-i\alpha} z, \\
\omega &= e^{-i\beta} w.
\end{align}
After this rotation, the affine transformation becomes:
\begin{equation}
\omega = \lambda_x \xi + i \lambda_y \eta, \quad \lambda_x, \lambda_y \in \mathbb{R}^+,
\end{equation}
where $\xi = \operatorname{Re}(\zeta)$, $\eta = \operatorname{Im}(\zeta)$, and:
\begin{align}
\lambda_x &= |A| + |B|, \\
\lambda_y &= |A| - |B|, \\
K_z &= \frac{\lambda_x}{\lambda_y}.
\end{align}
In the new coordinates, the principal directions now align with the real and imaginary axes in the \(\zeta\)-plane and \(\omega\)-plane. 

Now, we consider a ray in the \(\zeta\)-plane forming angle \(\theta\) with the principal direction (real axis). Its direction vector is \((\cos\theta, \sin\theta)\). After the affine transformation, the direction vector will be mapped to $(\lambda_x \cos\theta,  \lambda_y \sin\theta)$.

Let \(\phi\) be the angle between the transformed vector and the principal direction (real axis in the \(\omega\)-plane). Based on the above formulas, this angle satisfies:
\begin{equation}
\tan\phi = \frac{\lambda_y \sin\theta}{\lambda_x \cos\theta} = \frac{\lambda_y}{\lambda_x} \tan\theta.
\end{equation}
Substituting \(K_z = \lambda_x / \lambda_y\) gives:
\begin{equation}
\tan\phi = \frac{1}{K_z} \tan\theta. 
\end{equation}

Since the rotation of coordinates is angle-preserving, the above relationship holds in the original coordinates between the curve and the maximal stretch direction. \hfill $\blacksquare$

We remark that in the above proof, we assume that one side of the angle aligns with the maximal stretch direction. If neither side is along the maximal stretch direction, we can decompose the angle according to the maximal stretch direction and then obtain some related formulas. Specifically, we assume that the two sides make angles $\alpha$ and $\beta$ with the maximal stretch direction. The original angle is $\theta = \alpha - \beta$. (Here, we assume that the two sides are on the same side of the maximal stretch direction and $\alpha > \beta$.) By Theorem~\ref{theorem1}, the corresponding angles $\alpha^{\prime}$ and $\beta^{\prime}$ after mapping satisfy $\tan \alpha^{\prime} = \frac{1}{K_z}\tan \alpha$ and $\tan \beta^{\prime} = \frac{1}{K_z}\tan \beta$. The image angle $\phi = \alpha^{\prime} - \beta^{\prime}$ satisfies
\begin{equation}
    \tan \phi = \tan (\alpha^{\prime} - \beta^{\prime}) = \frac{\tan \alpha^{\prime} - \tan \beta^{\prime} }{1 + \tan \alpha^{\prime} \tan \beta^{\prime}} =K_z \frac{\tan \alpha - \tan \beta}{K_z^2 + \tan \alpha \tan \beta}.
\end{equation}
It can be observed that the derived formula is different from the previous one.

\subsection{Angular distortion estimation using the Beltrami coefficient}
Next, we prove that for a given quasi-conformal map $f$, an angle $\theta$ has the largest angular distortion if and only if its angle bisector is along one of the principal axes.

\begin{theorem}\label{theorem2}
    Let $f$ be a quasi-conformal map with Beltrami coefficient $\mu$, and let $\theta$ be an angle intersected by two curves. Denote $\phi$ as the corresponding angle after employing $f$. For a fixed $\theta$, the distortion $|\phi - \theta|$ achieves its maximum precisely when the bisector of $\theta$ is aligned with a principal axis.
\end{theorem}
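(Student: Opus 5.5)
Following Theorem~\ref{theorem1}, we first reduce to the affine model. After the two rotations used in its proof, $f$ acts locally as $(\xi,\eta)\mapsto(\lambda_x\xi,\lambda_y\eta)$ with $K_z=\lambda_x/\lambda_y>1$ (if $K_z=1$ there is no distortion and nothing to prove). Rotations preserve angles, so it suffices to work in this model. Fix the angle $\theta\in(0,\pi)$ and parametrize its position by the direction $t$ of one side measured from the maximal stretch axis, so the sides lie at directions $t$ and $t+\theta$. Define $g(s)=\arctan\!\big(\tfrac{1}{K_z}\tan s\big)$, continued continuously and monotonically on $\mathbb{R}$, with $g(s+\pi)=g(s)+\pi$. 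By Theorem~\ref{theorem1} each side at direction $s$ is mapped to direction $g(s)$. The image angle is then $\phi(t)=g(t+\theta)-g(t)$, which is exactly the decomposition in the remark after Theorem~\ref{theorem1}.

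The key steps are as follows. First, compute $g'(s)=\dfrac{K_z}{K_z^2\cos^2 s+\sin^2 s}=\dfrac{\lambda_x\lambda_y}{|\lambda_x\cos s+i\lambda_y\sin s|^2}$. This is positive and $\pi$-periodic. It is even about $s=0$, strictly decreasing on $[0,\pi/2]$, with maximum $K_z$ at $s=0$ and minimum $1/K_z$ at $s=\pi/2$. Second, $\phi'(t)=g'(t+\theta)-g'(t)$, so $\phi$ is critical exactly when $g'$ takes equal values at the two sides. Since $g'$ depends only on $\cos^2 s$, equality means $\cos^2(t+\theta)=\cos^2 t$. For $\theta\in(0,\pi)$ this forces $t+\theta\equiv -t \pmod \pi$, that is, the bisector direction $t+\theta/2$ is $\equiv 0$ or $\pi/2 \pmod{\pi/2}$. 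So the only critical configurations are those with the bisector on a principal axis.

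Third, classify these critical points. When the bisector lies on the stretch axis ($t=-\theta/2$), we get $\tan(\phi/2)=\tan(\theta/2)/K_z$, so $\phi<\theta$. At this point $\phi$ attains its global minimum, because $\phi'$ changes sign from negative to positive there, as seen from the monotonicity of $g'$ on half-periods. When the bisector lies on the shrink axis, $\tan(\phi/2)=K_z\tan(\theta/2)$, so $\phi>\theta$, and this is the global maximum. Because $\phi(t)$ is continuous and $\pi$-periodic with only these two critical classes, $\phi$ ranges exactly over $[\phi_{\min},\phi_{\max}]$. Hence $|\phi-\theta|$ is maximized at one of these two endpoints, so the maximum occurs only when the bisector is on a principal axis. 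The maximum of $|\phi-\theta|$ over all $t$ equals $\max(\theta-\phi_{\min},\phi_{\max}-\theta)$.

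The main obstacle is the phrase ``precisely when''. We must show that $|\phi-\theta|$ has no interior maximum at non-axis positions. Such a maximum could only occur where $\phi-\theta$ changes sign, but there $|\phi-\theta|=0$, so this case is harmless. We must also address whether both axis positions are maximizers or only one of them. The statement says ``a principal axis'', so it suffices to show that maximizers occur only at axis-bisector positions. Extrema of $|\phi-\theta|$ away from the zeros of $\phi-\theta$ coincide with extrema of $\phi$, and we have shown these occur only at axis-aligned bisectors. This gives the claim. If time permits, one can compare $\theta-2\arctan(\tan(\theta/2)/K_z)$ with $2\arctan(K_z\tan(\theta/2))-\theta$ to identify which axis is optimal. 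For $\theta<\pi/2$, the substitution $\theta\mapsto\pi-\theta$ symmetry suggests that the shrink-axis position is optimal.
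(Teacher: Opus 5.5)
Your proof is correct and has the same skeleton as the paper's. Both reduce to the diagonal affine model and vary the position of the angle. Both show the critical positions are exactly those with the bisector on a principal axis, and identify the stretch-axis position as the minimum of $\phi$ and the shrink-axis position as the maximum. The difference is the parametrization. The paper sets $b=\tan\alpha$, $n=\tan\theta$ and differentiates $\tan\phi$ as a rational function of $b$, obtaining $nb^2-2b-n=0$ with roots $-\tan(\theta/2)$ and $\cot(\theta/2)$. You differentiate the angle itself through the lifted direction map, so criticality becomes $g'(t+\theta)=g'(t)$, i.e.\ $\cos^2(t+\theta)=\cos^2 t$.

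Your version is more robust. It is unaffected by a side along the $y$-axis ($b=\infty$) or by a right angle ($n$ undefined). It also handles configurations in which $\phi$ passes through $\pi/2$. There the denominator $m^2b^2+(m^2-1)nb+1$ has real zeros, and the monotonicity intervals of $\tan\phi$ no longer translate directly into global extremes of $\phi$. You also make explicit, via continuity and $\pi$-periodicity of $\phi$, the passage from the extremes of $\phi$ to the maximizers of $|\phi-\theta|$, which the paper leaves implicit. The paper's route, in return, is a purely rational computation that needs no continuous lift of the direction map.

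One statement should be corrected. We have $g'(0)=K_z/K_z^2=1/K_z$ and $g'(\pi/2)=K_z$, so $g'$ is strictly \emph{increasing} on $[0,\pi/2]$; your own second formula gives $\lambda_y/\lambda_x$ at $s=0$. With the monotonicity you wrote, the sign change of $\phi'$ at $t=-\theta/2$ would come out positive-to-negative, contradicting your own classification.

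No harm results. The critical-point step uses only that $g'=K_z/(1+(K_z^2-1)\cos^2 s)$ is strictly monotone in $\cos^2 s$. The classification follows from the values alone: there are exactly two critical points per period, with $\phi<\theta$ at one and $\phi>\theta$ at the other.

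Your closing guess is also right. We have $\arctan(K_z u)+\arctan(u/K_z)>2\arctan u$ exactly when $u=\tan(\theta/2)<1$. So for acute $\theta$ only the shrink-axis position maximizes $|\phi-\theta|$, for obtuse $\theta$ only the stretch-axis one, with a tie at $\theta=\pi/2$. Thus ``precisely when'' can only mean ``only when'', which is what you and the paper actually prove.
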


\textit{Proof.}
Note that from Eq.~\eqref{eqt:first_order_approximation}, a quasi-conformal map can be locally approximated by an affine transformation. Moreover, by rotating the coordinate system, we may align the direction of maximal stretch with the $x$-axis. In this coordinate system, the mapping $f$ can be written as
\begin{equation}
    f(x,y) = (x, m y),
\end{equation}
where \( m = 1/K \) and \( K = \frac{1+|\mu|}{1-|\mu|} \ge 1 \) is the maximal dilation.

\begin{figure}[t!]
    \centering
    \includegraphics[width=\textwidth]{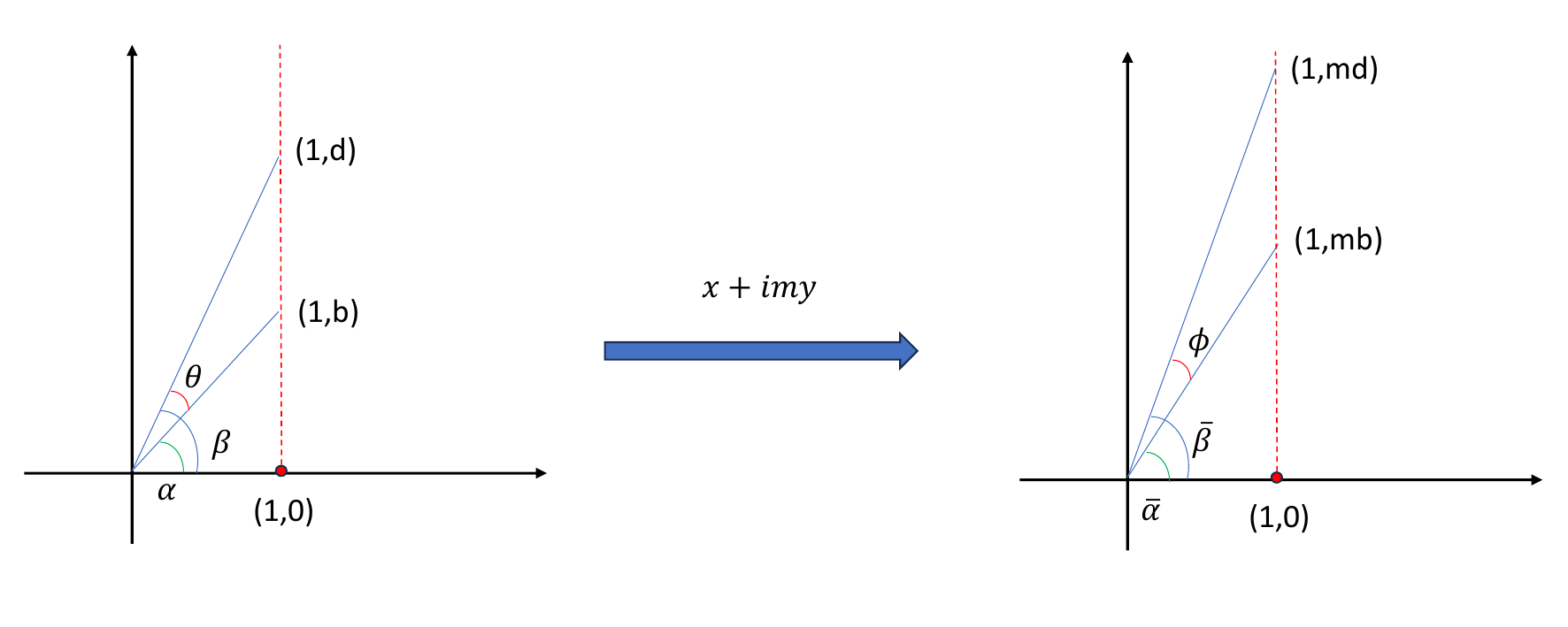}
    \caption{\textbf{An illustration of angular distortion.} Here, we show the angles $\theta$ and $\phi$ in their local neighborhood. (Left) The original angle $\theta$ obtained by the intersection of curves. (Right)~The result angle $\phi$ after the affine transformation.}
    \label{fig:BC_angle}
\end{figure}

Let the original angle be $\theta$ and set $n = \tan \theta$. Let $\alpha$ be the signed angle from the $x-$axis (maximal stretch direction) to one side of the angle, and write $b = \tan \alpha$. And the other side makes an angle $\beta = \alpha + \theta$ with the $x-$axis. After mapping, those two angles are mapped to $\Bar{\alpha}$ and $\Bar{\beta}$, respectively. And the result angle $\phi$ satisfies $\phi = \Bar{\beta} - \Bar{\alpha}$. Then, the problem can be transformed into the problem of finding the value $b$ to achieve the maximum of $|\phi - \theta|$ (see Fig.~\ref{fig:BC_angle}).

The tangent value of $\beta$ can be computed by:
\begin{equation}
    d = \tan \beta = \tan(\theta + \alpha) = \frac{\tan \alpha + \tan \theta}{1 - \tan \alpha \tan \theta} = \frac{b + n}{1 - bn}. 
\end{equation}
After applying the affine map, the corresponding angle $\phi$ satisfies
\begin{equation}
    \tan \phi = \tan(\Bar{\beta} - \Bar{\alpha}) = \frac{\tan \Bar{\beta} - \tan \Bar{\alpha}}{1 + \tan \Bar{\alpha} \tan \Bar{\beta}} = \frac{m(d-b)}{1+m^2db}.
\end{equation}

Substitute $d = \frac{b + n}{1 - bn}$, we can get:
\begin{equation}
    \tan \phi = \frac{mn(b^2 + 1)}{m^2b^2 + (m^2-1)nb + 1}.
\end{equation}
Next, by considering $\tan \phi$ as a function of $b\in \mathbb R$, we compute the derivative of 
\begin{equation}
    g(b) = \frac{mn(b^2 + 1)}{m^2b^2 + (m^2-1)nb + 1}.
\end{equation}
We have
\begin{equation}
    g'(b) = \frac{m n (m^{2} - 1) (n b^{2} - 2b - n)}
               {\bigl(m^{2} b^{2} + (m^{2} - 1) n b + 1\bigr)^{2}}.
\end{equation}
The critical points satisfy $nb^2 - 2b - n = 0$, with solutions
\begin{equation}
    b_{1,2} = \frac{1 \pm \sqrt{1+n^2}}{n}.
\end{equation}
Sign analysis of $f'(b)$, noting that $m^2-1 \leq 0$, confirms that $f(b)$ decreases on $(-\infty,b_1)$, increases on $(b_1,b_2)$, and decreases again on $(b_2,\infty)$. Hence $b_1$ yields the global minimum $\phi_{\min}$ and $b_2$ yields the global maximum $\phi_{\max}$ of $\phi$. 

Using the identities $b_1 = -\tan(\theta/2)$ and $b_2 = \cot(\theta/2)$, we can verify that:
\begin{itemize}
    \item $b_1$ corresponds to the bisector aligned with the $x$-axis (maximal stretch),
    \item $b_2$ corresponds to the bisector aligned with the $y$-axis (minimal stretch).
\end{itemize}

Therefore, the distortion $|\phi - \theta|$ achieves its maximum when the bisector lies along one of the two principal axes, and is strictly smaller otherwise.

\hfill $\blacksquare$

\subsection{Upper bound of the angular distortion in terms of the Beltrami coefficient}
In the following, we consider the largest angular distortion at a point under a quasi-conformal map. We assume that the bisector is aligned with the maximal stretch direction. When the bisector is aligned with the minimal stretch direction, the maximum expansion is symmetric and yields the same bound. Therefore, we omit that case.

For simplicity, we denote the original half-angle as $\theta$ (and hence the full angle is $2\theta$) and let $\phi$ be the corresponding half-angle after the mapping. Since the local approximation of $f$ by an affine transformation is $(x,y) \mapsto (x, m y)$ with $m = 1/K$ as discussed previously, we have
\begin{equation}
\tan \phi = \frac{1}{K} \tan \theta.
\end{equation}
Now, we define the half-angle deviation $\delta = \theta - \phi$. We have
\begin{equation}
\delta = \theta - \arctan \left( \frac{1}{K} \tan \theta \right).
\end{equation}
To obtain the maximum of $\delta$, we differentiate $\delta$ with respect to $\theta$ and set the derivative to zero:
\begin{equation}
\frac{d\delta}{d\theta} = 1 - \frac{\frac{1}{K} \sec^{2}\theta}
                                 {1 + \bigl( \frac{1}{K} \tan \theta \bigr)^{2}} = 0.
\end{equation}
Then we simplify the above derivative equation and obtain
\begin{equation}
    K\sec^{2}\theta = K^2 + \tan^{2}\theta.
\end{equation}
Using the identity $\sec^{2}\theta = 1 + \tan^{2}\theta$, we have
\begin{equation}
    (\tan^{2}\theta)(K - 1) = K(K - 1).
\end{equation}
Since $K \neq 1$, the above equation yields
\begin{equation}
    \tan^{2}\theta = K.
\end{equation}
Taking $\theta \in (0,\pi/2)$, we have $\tan \theta = \sqrt{K}$ and consequently $\tan \phi = 1/\sqrt{K}$. Therefore, the maximal half-angle deviation $\delta_{\max}$  satisfies
\begin{equation}
\tan \delta_{\max} = \frac{\tan \theta - \tan \phi}
                            {1 + \tan \theta \tan \phi}
                   = \frac{\sqrt{K} - 1/\sqrt{K}}{1 + 1}
                   = \frac{K - 1}{2\sqrt{K}}.
\end{equation}

Constructing a right triangle with opposite side $K-1$ and adjacent side $2\sqrt{K}$, we get the hypotenuse as follows:
\begin{equation}
    \text{hypotenuse} = \sqrt{(K-1)^2 + (2\sqrt{K})^2} = K+1.
\end{equation}
Thus, we have
\begin{equation}
    \cos \delta_{\max} = \frac{2\sqrt{K}}{K+1} \quad \Rightarrow \quad \delta_{\max} = \arccos\left(\frac{2\sqrt{K}}{K+1}\right) = \arcsin\left(\frac{K-1}{K+1}\right) = \arcsin\left(|\mu|\right).
\end{equation}

Recalling that the Beltrami coefficient satisfies $|\mu| = \frac{K-1}{K+1}$, we obtain
\begin{equation}
\delta_{\max}(z) = \arcsin(|\mu(z)|).
\end{equation}
Since the original full angle is $2\theta$, the maximal distortion of the full angle equals $2\delta_{\max}$. In other words, the maximal angular distortion at the point $z$ is 
\begin{equation}
    \epsilon_{\mu}(z) = 2 \arcsin(|\mu(z)|).
\end{equation}

\subsection{The discrete case}
After establishing the above theoretical results and bounds, we can move on to the discrete case. More specifically, note that in computer graphics and geometry processing, planar shapes and surfaces are commonly discretized in the form of triangle meshes. Therefore, it is natural to consider how the discrete Beltrami coefficient is related to the angles of the triangle elements.

For any given quasi-conformal mapping $f = u + iv$, its corresponding Beltrami coefficient $\mu$ can be computed as following
\begin{equation}
    \mu = \frac{(u_x - v_y) + i(v_x + u_y)}{(u_x + v_y) + i (v_x - u_y)}.
\end{equation}
In the discrete case, the mapping $f$ on the triangle mesh is piecewise linear, and the corresponding Beltrami coefficient $\mu$ can also be discretized on each triangle $T$. The details are provided in~\cite{lui2013texture}, and we outline the key formulations as follows. Consider a piecewise linear function $f$. Within the triangle $T$, it can be written as
\begin{equation}
    f|_{T}(x, y) = \begin{pmatrix}
a_{T} x + b_{T} y + p \\
c_{T} x + d_{T} y + q
\end{pmatrix},
\end{equation}
where $a_{T}$, $b_{T}$, $c_{T}$, $d_{T}$,  $p$, and $q$ are constants. Following the Beltrami equation, the associated Beltrami coefficient $\mu_T$ can be computed by:
\begin{equation} \label{eqt:mu_T}
    \mu_T = \frac{\partial f / \partial \bar{z}}{\partial f / \partial z} = \frac{(a_{T} - d_{T}) + i(c_{T} + b_{T})}{(a_{T} + d_{T}) + i(c_{T} - b_{T})}.
\end{equation}
It is clear that $\mu_T$ is a constant for each triangle $T$. Now, with the discrete Beltrami coefficient $\mu_T$ defined on every $T$, we can consider the Beltrami-coefficient-based estimate in the previous section as
\begin{equation} \label{eqt:epsilon_mu_T}
    \epsilon_{\mu_T} = 2 \arcsin(|\mu_T|)
\end{equation}
for every face $T$.

We remark that while the Beltrami coefficient is defined based on mappings on the complex plane, we can easily extend the computation of the Beltrami coefficient for surface mappings. Let $f: \mathcal{M} \rightarrow \mathcal{N}$ be a quasi-conformal mapping from a surface $\mathcal{M}$ to another domain $\mathbb{N}$, where $\mathcal{M}$ and $\mathcal{N}$ can be in either $\mathbb{R}^3$ or $\mathbb{R}^2$. For each triangle element $T_0$ in $\mathcal{M}$ and its corresponding triangle $T_1$ in $\mathcal{N}$, we can first map both of them onto the complex plane using a rigid transformation. We denote the resulting triangles as $\Tilde{T_0}$ and $\Tilde{T_1}$. Now, to compute the Beltrami coefficient of the overall mapping $f$ on the face pair $T_0, T_1$, we can instead consider the mapping $\Tilde{f}: \Tilde{T_0} \rightarrow \Tilde{T_1}$ between the two transformed triangles on the complex plane, for which the Beltrami coefficient can be computed easily (note that here we only focus on the norm of it). As rigid transformations are conformal, by Eq.~\eqref{composition_BC}, the above operations will not change the norm of the Beltrami coefficient. We can therefore obtain both $|\mu_T|$ (using Eq.~\eqref{eqt:mu_T}) and $\epsilon_{\mu_T}$ (using Eq.~\eqref{eqt:epsilon_mu_T}) without encountering any issue.

As for the discrete angular distortion, from the triangulation of the given surface, we can easily define the absolute angular distortion at every angle as
\begin{equation}
    \epsilon_{\text{angle}}([v_i, v_j, v_k]) = |\angle [f(v_i), f(v_j), f(v_k)] - \angle [v_i, v_j, v_k]|,
\end{equation}
where the three points $v_i, v_j, v_k$ define an angle, $\angle [v_i, v_j, v_k]$ is the angle value on the original surface, and $\angle [f(v_i), f(v_j), f(v_k)]$ is the angle value under the mapping $f$.

Moreover, as the discrete Beltrami coefficient $\mu_T$ is a face-based quantity, to ensure a fair comparison, we can also construct a face-based angular distortion measure by taking the average value of the angular distortions of the three angles in every triangle $T$. In other words, for every triangle $T$ with vertices $v_i, v_j, v_k$, we define the face-based angular distortion measure as
\begin{equation}
    \epsilon_{\text{angle}_T} = \frac{1}{3}\left(\epsilon_{\text{angle}}([v_i, v_j, v_k]) +\epsilon_{\text{angle}}([v_j, v_k, v_i]) + \epsilon_{\text{angle}}([v_k, v_i, v_j])\right). 
\end{equation}

\section{Experimental results} \label{sec:experiment}
In this section, we present experimental results to verify our theoretical results and bounds on the Beltrami coefficient and angular distortion. The experiments were performed using MATLAB R2021a on the Windows platform on a computer with an Intel(R) Core(TM) i9-12900 2.40 GHz processor and 32 GB memory. All surfaces are discretized in the form of triangle meshes.

\begin{figure}[t!]
    \centering
    \includegraphics[width=\textwidth]{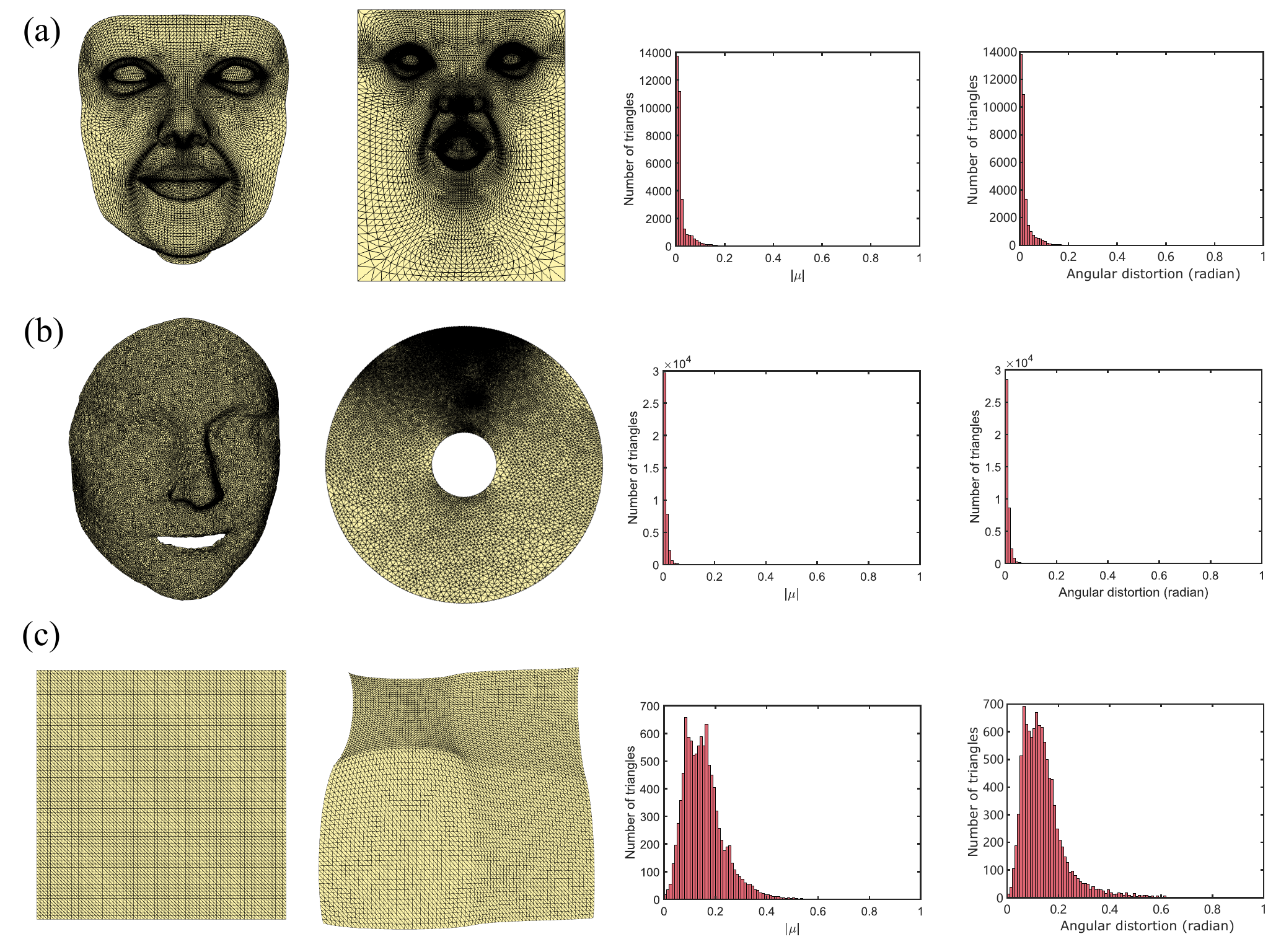}
     \caption{\textbf{Evaluation of the Beltrami coefficient and angular distortion of different mappings for open surfaces.} Each row shows one example. (a)~Rectangle conformal map~\cite{meng2016tempo} of the Human face model. (b)~Annulus conformal map~\cite{choi2021efficient} of the Sophie model. (c)~Density-equalizing map~\cite{choi2018density} of the Square model. Left to right: The input surface mesh, the mapping result, the histogram of the norm of the Beltrami coefficient $|\mu_T|$, and the histogram of the face-based angular distortion $\epsilon_{\text{angle}_T}$.}
    \label{fig:open_surface}
\end{figure}

\subsection{Simply-connected and multiply-connected open surfaces}
We start by considering different mapping methods for simply-connected and multiply-connected open surfaces.
In Fig.~\ref{fig:open_surface}, we present three mapping examples, including a rectangular conformal map of a simply-connected open human face model using~\cite{meng2016tempo}, an annulus conformal map of a punctured face example using~\cite{choi2021efficient}, and a density-equalizing map of a simply-connected open Square model using \cite{choi2018density}. For each example, we plot the histogram of the norm of the discretized Beltrami coefficient $|\mu_T|$ and the histogram of face-averaged angular distortion $\epsilon_{\text{angle}_T}$. It can be observed that for the two conformal mapping examples, the angular distortion histogram closely resembles the one for the Beltrami coefficient. For the density-equalizing mapping example, the angular distortion histogram is very similar to that of the Beltrami coefficient. The above results demonstrate the close relationship between the Beltrami coefficient and angular distortions.

\begin{table}[t]
\centering
\small
\caption{\textbf{Quantitative analysis of the Beltrami coefficient and angular distortion of geometric mappings for simply-connected and multiply-connected open surfaces.} For each surface mapping example, we record the number of triangle elements, the mean and maximum values of $|\mu_T|$, the mean and maximum values of $\epsilon_{\text{angle}_T}$ and $\epsilon_{\mu_T}$, and the maximum value of $\epsilon_{\text{angle}}$.} 
\label{tab:open_surface}
    \resizebox{1\linewidth}{!}{
\begin{tabular}{C{22mm}| r r r r r r r r}
\toprule
Surface and mapping method & \# Faces & $\text{mean}(|\mu_T|)$ & $\text{max}(|\mu_T|)$ & \begin{tabular}{@{}c@{}}$\text{mean}(\epsilon_{\text{angle}_T})$\end{tabular} & \begin{tabular}{@{}c@{}}$\text{mean}(\epsilon_{\mu_T})$\end{tabular} & \begin{tabular}{@{}c@{}}$\text{max}(\epsilon_{\text{angle}_T})$\end{tabular}& \begin{tabular}{@{}c@{}}$\text{max}(\epsilon_{\text{angle}})$\end{tabular} & \begin{tabular}{@{}c@{}}$\text{max}(\epsilon_{\mu_T})$\end{tabular} \\
\midrule
Brain (disk conformal) & 48463 & 0.0251 & 0.4985 & 0.0254 & 0.0501&  0.5468 & 0.8202 & 1.0437 \\ \hline 
Human face (rectangular conformal) & 34144 & 0.0216 & 0.393 & 0.0212 & 0.0432 & 0.4544 & 0.5887 & 0.8078 \\ \hline 
Sophie (annulus conformal) & 41038 & 0.0096 & 0.6631 & 0.0102 & 0.0192 & 0.6129 & 0.8163 & 1.4499 \\ \hline 
Square (density-equalizing) & 10368 & 0.1546 & 0.7467 & 0.1410 & 0.3118 & 0.9682 & 1.4523 &  1.6863   \\ %
\bottomrule
\end{tabular}
}
\end{table}

For a more quantitative analysis, Table~\ref{tab:open_surface} records the number of triangle elements, the mean and maximum values of $|\mu_T|$, the mean and maximum values of $|\epsilon_{\text{angle}_T}|$ and $|\epsilon_{\mu_T}|$, and the maximum values of $|\epsilon_{\text{angle}}|$ for all mapping examples in Fig.~\ref{fig:brain_disk} and Fig.~\ref{fig:open_surface}. It can be observed that for all examples, the difference between $\text{mean}(|\mu_T|)$ and $\text{mean}(\epsilon_{\text{angle}_T})$ is very small. This shows that the two concepts are highly consistent and can both be used for assessing the overall conformality of the mapping. We can also consider the maximum value of $|\mu_T|$, $\epsilon_{\text{angle}_T}$, and $|\epsilon_{\text{angle}}|$. In this case, their actual values are more different, but the three sets of values still follow a consistent trend. As for our maximum angular distortion estimate $\epsilon_{\mu_T}$, it can be observed that both the mean and maximum values are consistently higher than those of $\epsilon_{\text{angle}_T}$ and $|\epsilon_{\text{angle}}|$. This can be explained by the fact that $\epsilon_{\mu_T}$ requires all angle bisectors to be aligned with the direction of maximum stretching and hence serves as an upper bound for the angular distortion. In all examples, we see that the differences among $\text{max}(\epsilon_{\text{angle}})$, $\text{max}(\epsilon_{\text{angle}_T})$, and $\text{max}(\epsilon_{\mu_T})$ are generally large, suggesting that the angle bisectors do not align with the direction of maximum stretching. Overall, the above results show that the Beltrami coefficient and angular distortion are highly related quantitatively.

\subsection{Genus-0 closed surfaces}

\begin{figure}[t!]
    \centering
    \includegraphics[width=\textwidth]{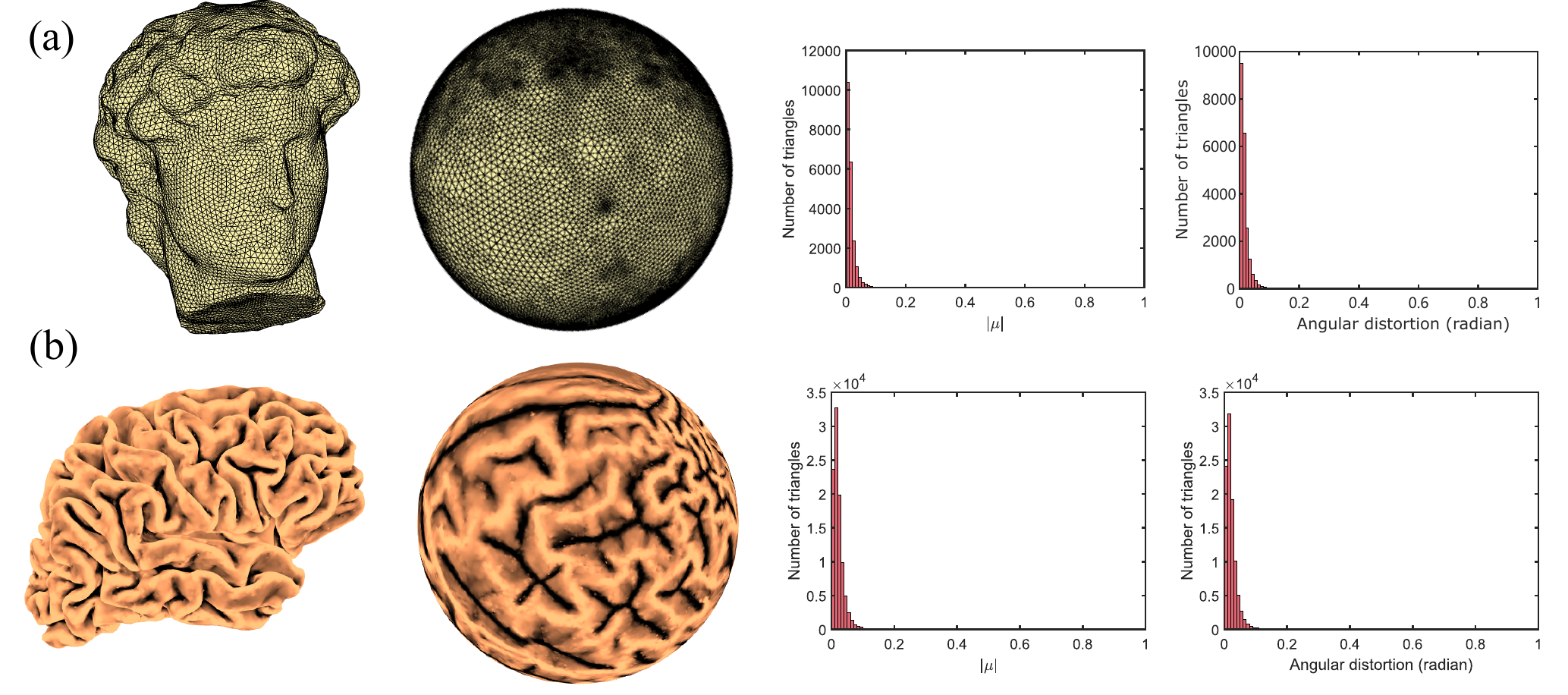}
    \caption{\textbf{Evaluation of the Beltrami coefficient and angular distortion of spherical conformal parameterization~\cite{choi2015flash} of genus-0 surfaces.} Each row shows one example. (a) The David model. (b) The brain model. Left to right: The input surface mesh, the parameterization result, the histogram of the norm of the Beltrami coefficient $|\mu_T|$, and the histogram of the face-based angular distortion $\epsilon_{\text{angle}_T}$.}
    \label{fig:sphere_conf}
\end{figure}

Now, we consider different mapping examples for genus-0 closed surfaces. Note that the Beltrami coefficient is a complex-valued function defined on the complex plane. To quantify the quasi-conformal distortions of the triangular elements under the mapping, we isometrically map each pair of the triangular elements before and after the mapping into the complex plane and compute the Beltrami coefficients using the planar states. First, Fig.~\ref{fig:sphere_conf} shows two spherical conformal mapping~\cite{choi2015flash} examples and the evaluation of the Beltrami coefficient and angular distortion. Again, we can see a high similarity between the distribution of $|\mu_T|$ and the face-based angular distortion $\epsilon_{\text{angle}_T}$. This shows that the Beltrami coefficient can also serve as an accurate tool to measure the angular distortion. 

In Fig.~\ref{fig:sphere_sdem}, we consider spherical area-preserving parameterization examples obtained by the spherical density-equalizing mapping method in~\cite{lyu2024spherical} to study the two types of conformality measures for highly non-conformal mappings. Since the mappings are area-preserving, it is expected that the conformal distortion is much larger. We observe that the histograms of $|\mu_T|$ and $\epsilon_{\text{angle}_T}$ are still highly consistent in this case, which suggests that both the Beltrami coefficient and the face-based angular distortion measures can effectively capture the conformal distortion of a wide range of mappings.

\begin{figure}[t!]
    \centering
    \includegraphics[width=\textwidth]{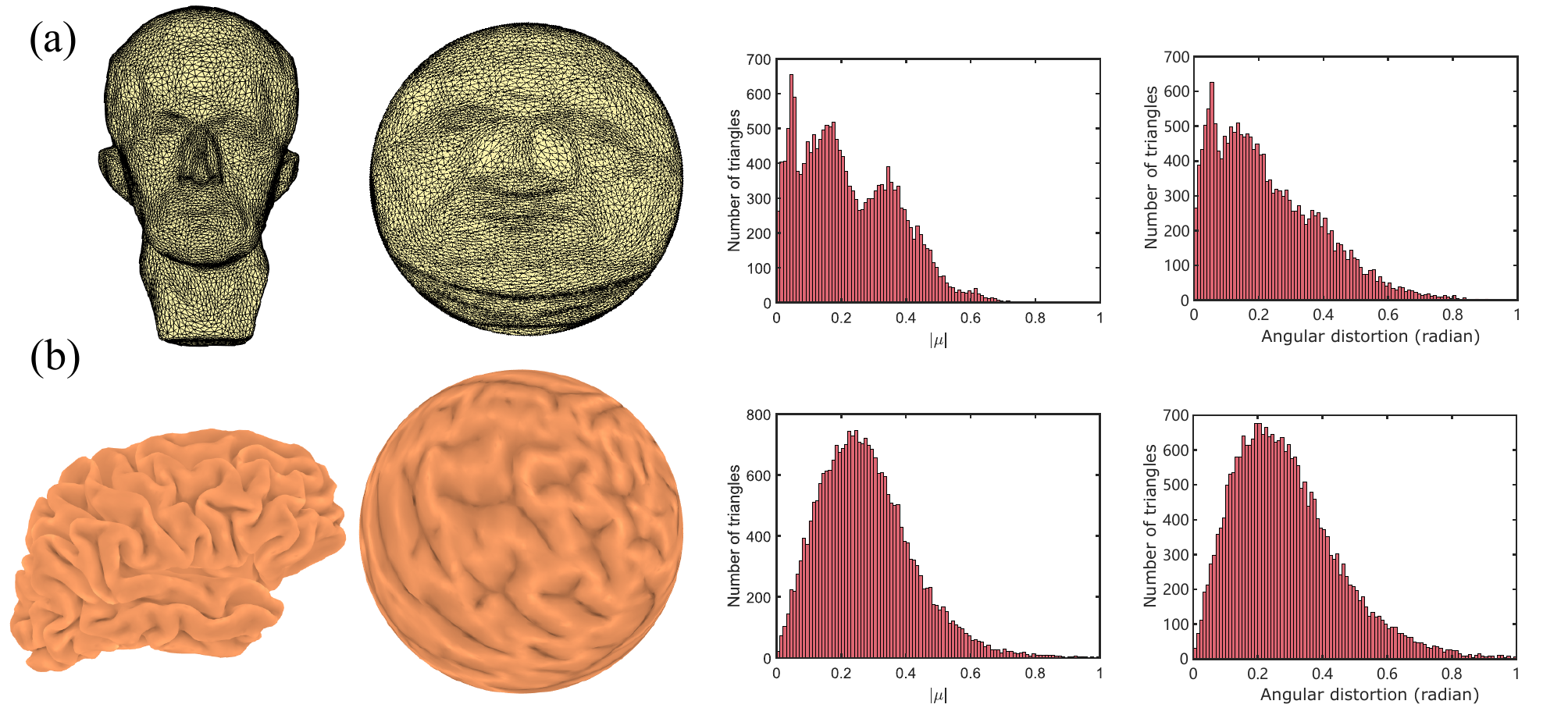}
    \caption{\textbf{Evaluation of the Beltrami coefficient and angular distortion of spherical density-equalizing map~\cite{lyu2024spherical} for genus-0 closed surfaces.} Each row shows one example. (a) The Max Planck model. (b) The brain model. Left to right: The input surface mesh, the parameterization result, the histogram of the norm of the Beltrami coefficient $|\mu_T|$, and the histogram of the face-based angular distortion $\epsilon_{\text{angle}_T}$.}
    \label{fig:sphere_sdem}
\end{figure}

\begin{figure}[t!]
    \centering
    \includegraphics[width=\textwidth]{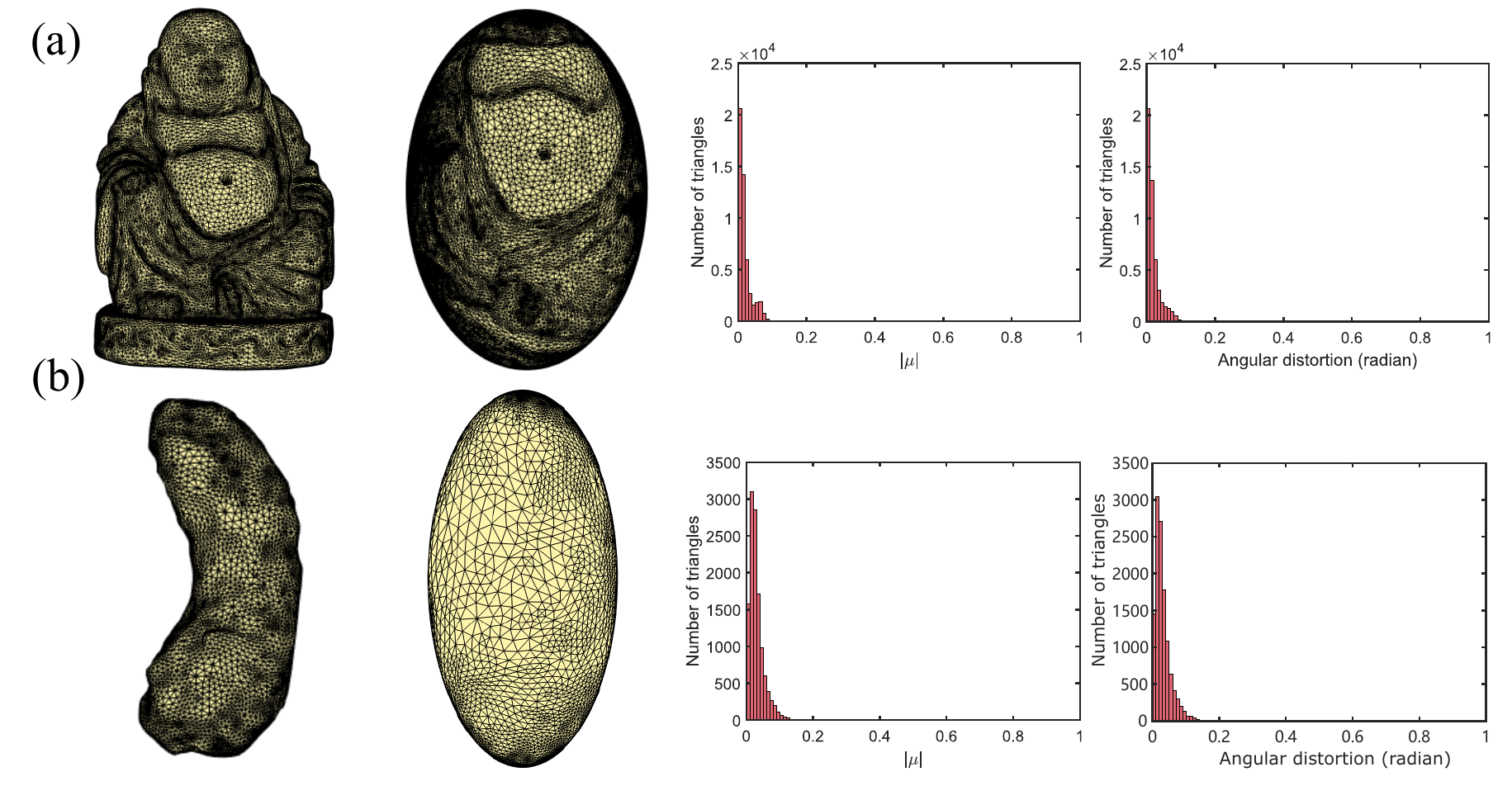}
    \caption{\textbf{Evaluation of the Beltrami coefficient and angular distortion of ellipsoidal conformal map~\cite{choi2024fast} for genus-0 closed surfaces.} Each row shows one example. (a) The Buddha model. (b) The hippocampus model. Left to right: The input surface mesh, the parameterization result, the histogram of the norm of the Beltrami coefficient $|\mu_T|$, and the histogram of the face-based angular distortion $\epsilon_{\text{angle}_T}$.}
    \label{fig:ellip_conf}
\end{figure}

\begin{figure}[t!]
    \centering
    \includegraphics[width=\textwidth]{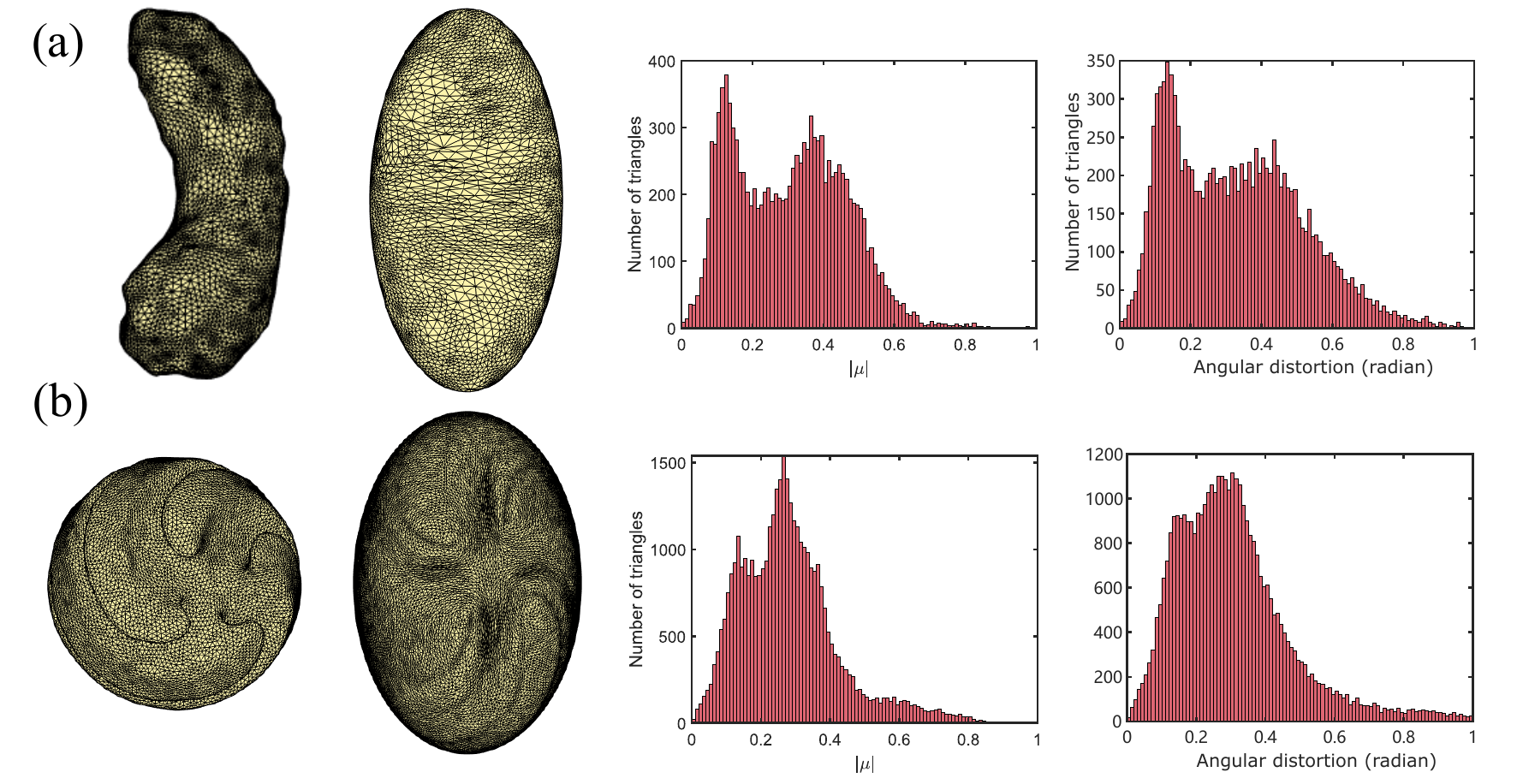}
    \caption{\textbf{Evaluation of the Beltrami coefficient and angular distortion of ellipsoidal density-equalizing map~\cite{lyu2024ellipsoidal} for genus-0 closed surfaces.} Each row shows one example. (a) The hippocampus model. (b) The twisted ball model. Left to right: The input surface mesh, the parameterization result, the histogram of the norm of the Beltrami coefficient $|\mu_T|$, and the histogram of the face-based angular distortion $\epsilon_{\text{angle}_T}$.}
    \label{fig:ellip_edem}
\end{figure}

In Fig.~\ref{fig:ellip_conf} and Fig.~\ref{fig:ellip_edem}, we further consider ellipsoidal mapping examples for genus-0 closed surfaces. Specifically, we apply the ellipsoidal conformal mapping method~\cite{choi2024fast} and the ellipsoidal area-preserving mapping~\cite{lyu2024ellipsoidal} on a variety of surfaces in graphics and medicine. Here, note that the target domain in the ellipsoidal mappings can be flexible and may involve ellipsoidal geometries with different prescribed radii, and hence they serve as good test cases for us to consider our measures for more general mapping problems. Analogous to the previous examples, we can see that the histograms of $|\mu_T|$ and $\epsilon_{\text{angle}_T}$ are highly consistent in all conformal and area-preserving examples.

Table~\ref{tab:closed_surface} presents the detailed statistics of the above-mentioned mapping examples for genus-0 closed surfaces. For each surface mapping example, we record the number of triangle elements, the mean and maximum values of $|\mu_T|$, the mean and maximum values of $|\epsilon_{\text{angle}_T}|$ and $|\epsilon_{\mu_T}|$, and the maximum value of $\epsilon_{\text{angle}}$. Again, the results reveal the close relationship between the Beltrami coefficient and the angular distortion. Specifically, the difference between $\text{mean}(|\mu_T|)$ and $\text{mean}(\epsilon_{\text{angle}_T})$ is again very small in all examples. Also, the trends in $\text{max}(|\mu_T|)$ and $\text{max}(\epsilon_{\text{angle}})$ are consistent. Moreover, the Beltrami coefficient-based estimate $\max(|\epsilon_{\mu_T}|)$ provides a rigorous upper bound for the maximum angular distortions $\max(|\epsilon_{\text{angle}_T}|)$ and $\max(|\epsilon_{\text{angle}}|)$. Overall, the experiment results have verified and illustrated the theoretical connection between the Beltrami coefficient and angular distortion established in our work.

\begin{table}[t!]
\centering
\small
\caption{\textbf{Quantitative analysis of the Beltrami coefficient and angular distortion of geometric mappings for genus-0 closed surfaces.} For each surface mapping example, we record the number of triangle elements, the mean and maximum values of $|\mu_T|$, the mean and maximum values of $\epsilon_{\text{angle}_T}$ and $\epsilon_{\mu_T}$, and the maximum value of $\epsilon_{\text{angle}}$.} 
\label{tab:closed_surface}
    \resizebox{1\linewidth}{!}{
\begin{tabular}{C{30mm}|  r r r r r r r r}
\toprule
Surface and mapping method  & \# Faces & $\text{mean}(|\mu_T|)$ & $\text{max}(|\mu_T|)$ & \begin{tabular}{@{}c@{}}$\text{mean}(\epsilon_{\text{angle}_T})$\end{tabular} & \begin{tabular}{@{}c@{}}$\text{mean}(\epsilon_{\mu_T})$\end{tabular} & \begin{tabular}{@{}c@{}}$\text{max}(\epsilon_{\text{angle}_T})$\end{tabular} & \begin{tabular}{@{}c@{}}$\text{max}(\epsilon_{\text{angle}})$\end{tabular} & \begin{tabular}{@{}c@{}}$\text{max}(\epsilon_{\mu_T})$\end{tabular} \\
\midrule
David (conformal) & 21338 & 0.0145 & 0.1513 & 0.0159 & 0.0291 & 0.1789 & 0.2684 & 0.3037 \\ \hline
Brain (conformal) & 96970 & 0.0217 & 0.7474 & 0.0220 & 0.0434 & 0.8711 & 1.3066 & 1.6884 \\ \hline
Max Planck (area-preserving) & 25000 & 0.2221 & 0.8622 & 0.2280 & 0.4479 &  0.9769  & 1.4654 & 2.0794 \\ \hline 
Brain (area-preserving) & 25000 & 0.2777 & 0.9977 & 0.2921 & 0.5828 & 1.7617 & 2.6425 & 3.0060 
\\ \hline
Buddha (conformal) & 50002 & 0.0191 & 0.2603 &	0.0195 & 0.0380 & 0.2525 & 0.3878 & 0.5266\\ \hline 
Hippocampus (conformal) & 12000 & 0.0299 & 0.2451 & 0.0312 & 0.0598 & 0.2830 & 0.4245 & 0.4952\\ \hline
Twisted ball (area-preserving) & 38620 & 0.2798 & 0.8828 & 0.3110 & 0.5672 & 1.3220 & 1.9830 & 2.1723 \\ \hline 
Hippocampus (area-preserving) & 12000 & 0.3079 & 0.9784 & 0.3336 & 0.6261 & 1.3347 & 2.0021 & 2.7247\\ 
\bottomrule
\end{tabular}
}
\end{table}

\subsection{Further analysis on mesh quality and mesh resolution}
It is natural to ask whether our established connection between the Beltrami coefficient and angular distortion may be affected by the quality and resolution of the discrete meshes. In this section, we further analyze the Beltrami coefficient and angular distortion of mapping examples with different mesh quality and mesh resolution. 

First, we consider two input meshes with different mesh quality. Specifically, in Fig.~\ref{fig:quality}(a), we consider a high-quality Delaunay triangulation (with around 900 triangles) on a rectangular domain. Then, in Fig.~\ref{fig:quality}(b), we use the same set of vertices but construct another triangulation that violates the Delaunay property and contains a large number of sharp triangles. We then apply the same geometric mapping to both meshes and assess the Beltrami coefficient and angular distortion of the two results. In both mapping examples, it can be observed that histograms of $|\mu_T|$ and $\epsilon_{\mu_T}$ are consistent. One can also see that the corresponding histograms in (a) and (b) are highly similar. This suggests that our method is robust and maintains consistent performance across meshes of different qualities.

Next, we consider mapping examples with different mesh resolutions as shown in Fig.~\ref{fig:resolution}. Specifically, in (a) we consider a coarse triangle mesh with around 900 elements on a square domain. Then, in (b) we consider a denser mesh with around 3600 elements. For both meshes, we employ the same geometric mapping and analyze the results. It can again be observed that both the $|\mu_T|$ and $\epsilon_{\mu_T}$ histograms remain highly similar in both examples. This suggests that our established results and bounds are valid and consistent for different mesh resolutions.

\begin{figure}[t!]
    \centering
    \includegraphics[width=\textwidth]{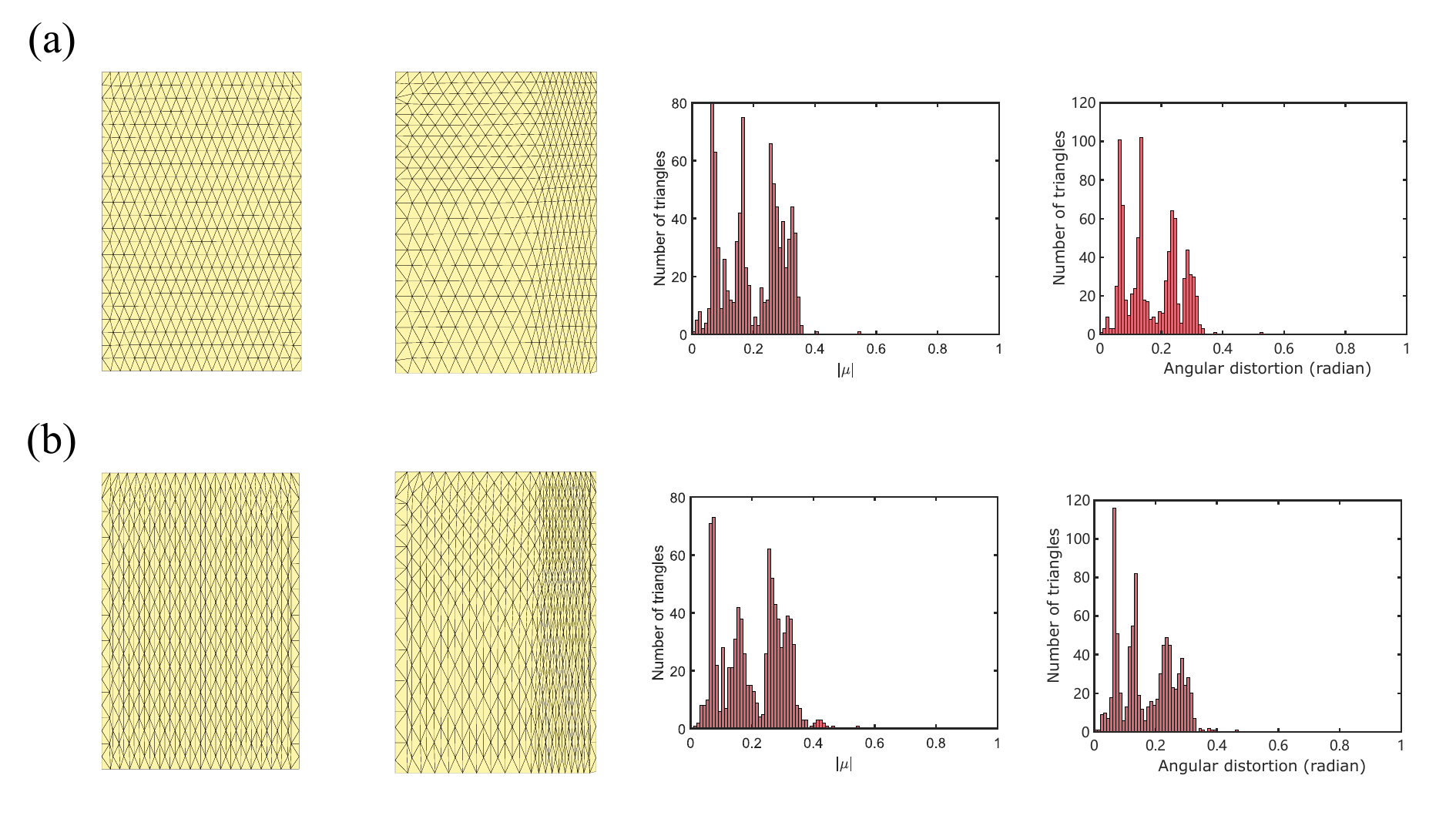}
    \caption{\textbf{Evaluation of the Beltrami coefficient and angular distortion for meshes with different quality.} Each row shows one example. (a) A Delaunay mesh with highly regular triangle elements. (b) A non-Delaunay mesh with a large number of sharp triangles. Left to right: The input mesh, the mapping result, the histogram of the norm of the Beltrami coefficient $|\mu_T|$, and the histogram of the face-based angular distortion $\epsilon_{\text{angle}_T}$.}
    \label{fig:quality}
\end{figure}

\begin{figure}[t!]
    \centering
    \includegraphics[width=\textwidth]{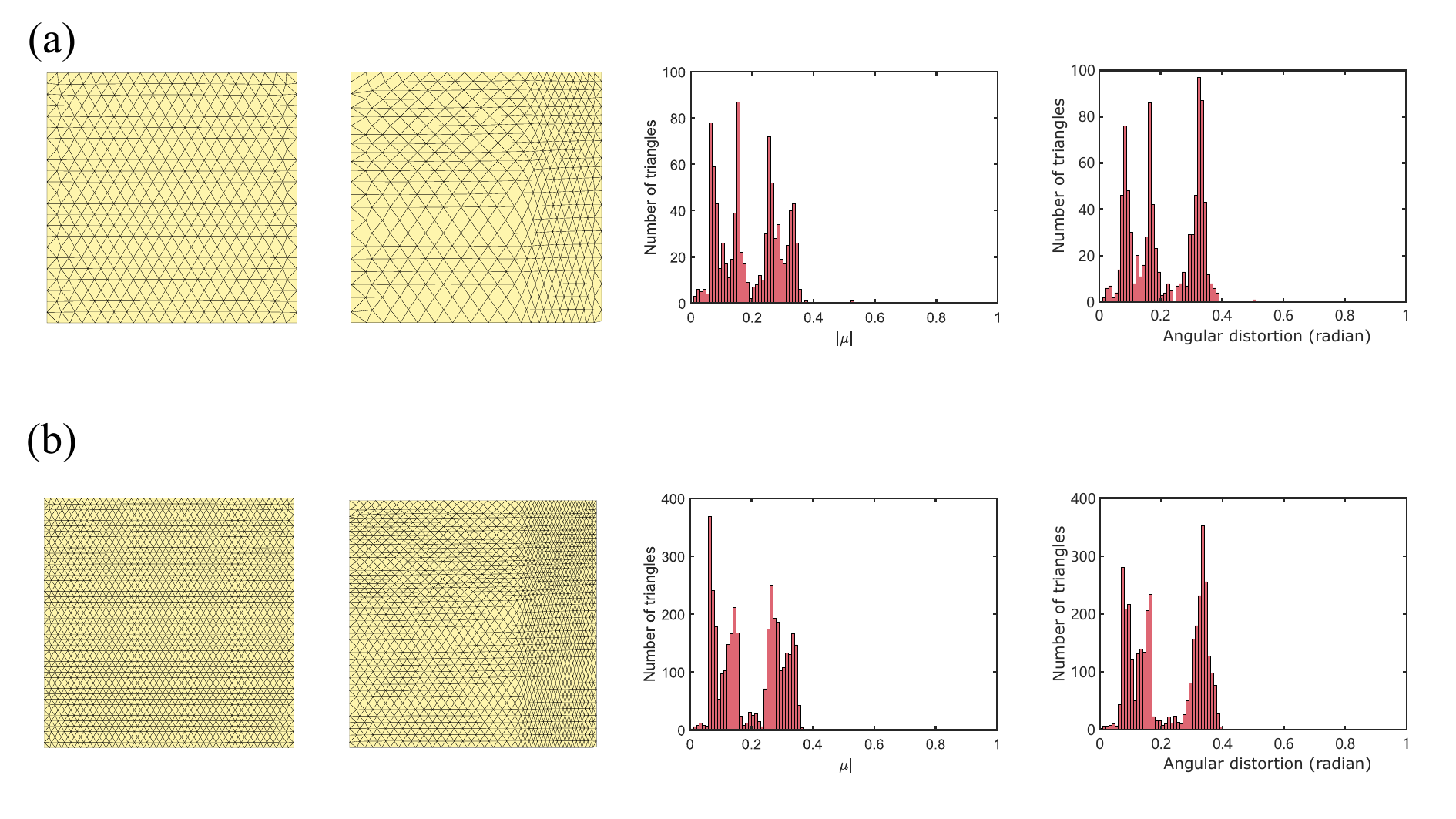}
    \caption{\textbf{Evaluation of the Beltrami coefficient and angular distortion for meshes with different resolutions.} Each row shows one example. (a) A coarse mesh with around 900 elements. (b) A denser mesh with around 3600 elements. Left to right: The input surface mesh, the mapping result, the histogram of the norm of the Beltrami coefficient $|\mu_T|$, and the histogram of the face-based angular distortion $\epsilon_{\text{angle}_T}$.}
    \label{fig:resolution}
\end{figure}

\section{Conclusion} \label{sec:conclusion}
While many geometric mapping methods have been developed over the past several decades, the focus has been primarily on the algorithmic advances. For the evaluation of the geometric distortion, many works have only considered the direct use of the angular distortion based on discrete mesh elements. In this work, we have established theoretical results to connect the Beltrami coefficient and the angular distortion in both the continuous case and the discrete case. Specifically, we have derived various estimates and bounds of the overall and maximum angular distortion in terms of the Beltrami coefficient. The estimates provide us with a simple but effective way to connect intuitive angle-based measures, which are commonly used in graphics and imaging, with more rigorous analytical tools on quasi-conformal distortions in complex analysis. Altogether, our theoretical results provide a solid foundation and a new perspective for the design and analysis of geometric mapping methods for practical applications.

Note that our current work is limited to triangle meshes. In our future work, we plan to extend our study to quadrilateral or polygonal meshes to aid the design and analysis of a wider class of mapping methods. Another future direction is to extend our study to three-dimensional (3D) volumetric mapping problems and explore the theoretical connections between the concepts and bounds of the 3D quasi-conformal distortion and the angles in the volumetric mesh elements. Finally, we plan to utilize the established results to quantify changes and deformations for shape analysis problems in biology and medicine.  

\bibliographystyle{ieeetr}
\bibliography{reference.bib}

\end{document}